\newlength \figwidth
\def\BState{\State\hskip-\ALG@thistlm}
\def\BibTeX{{\rm B\kern-.05em{\sc i\kern-.025em b}\kern-.08em
    T\kern-.1667em\lower.7ex\hbox{E}\kern-.125emX}}
\newcommand*\xbar[1]{%
  \hbox{%
    \vbox{%
      \hrule height 0.5pt 
      \kern0.36ex
      \hbox{%
        \kern-0.12em
        \ensuremath{#1}%
        \kern-0.12em
      }%
    }%
  }%
}
\newfont{\bbb}{msbm10 scaled 500}
\newfont{\bb}{msbm10 scaled 1100}
\newcommand{\Rc}{{\cal R}}
\newcommand{\bx}{{\text{b}}}
\newcommand{\hx}{{\text{h}}}
\newcommand{\nx}{{\text{n}}}
\newcommand{\ox}{{\text{o}}}
\newcommand{\px}{{\text{p}}}
\newcommand{\sx}{{\text{s}}}
\renewcommand{\d}{\mathrm{d}}
\def\e{\text{e}}
\newcommand{\executeiffilenewer}[3]{%
\ifnum\pdfstrcmp{\pdffilemoddate{#1}}%
{\pdffilemoddate{#2}}>0%
{\immediate\write18{#3}}\fi%
}
\newcommand{%
\executeiffilenewer{.svg}{.pdf}%
{inkscape -z -D --file=.svg --export-pdf=.pdf --export-latex}%
\input{.pdf_tex}%

}[1]{%
\executeiffilenewer{#1.svg}{#1.pdf}%
{inkscape -z -D --file=#1.svg --export-pdf=#1.pdf --export-latex}%
\input{#1.pdf_tex}%

}
\begin{document}
\pagenumbering{gobble}

\def \pcov{\mathcal{P}_\mathrm{cov}}
\def \pcovl{\mathcal{P}_\mathrm{cov}^\mathrm{L}}
\def \pcovn{\mathcal{P}_\mathrm{cov}^\mathrm{N}}
\def \pcovu{\mathcal{P}_\mathrm{cov}^\upsilon}
\def \pcovuav{\mathcal{P}_\mathrm{u}}
\def \pcovc{\mathcal{P}_\mathrm{g}}
\def \sinr{\mathsf{SINR}}
\def \du{d_\mathrm{u}}
\def \dc{d_\mathrm{g}}
\def \db{d_\mathrm{b}}
\def \ru{r_\mathrm{u}}
\def \Ru{R_\mathrm{u}}
\def \rc{r_\mathrm{g}}
\def \rb{r_\mathrm{b}}
\def \al{\alpha_\mathrm{L}}
\def \an{\alpha_\mathrm{N}}
\def \au{\alpha_\upsilon}
\def \bl{\beta_\mathrm{L}}
\def \bn{\beta_\mathrm{N}}
\def \bu{\beta_\upsilon}
\def \bx{\beta_\xi}
\def \pu{\mathrm{P_u}}
\def \pum{\mathrm{P_u^M}}
\def \pcm{\mathrm{P_g^M}}
\def \pc{\mathrm{P_g}}
\def \pcl{\mathrm{P_g^L}}
\def \pcn{\mathrm{P_g^N}}
\def \hu{\mathrm{h_u}}
\def \htx{h_\mathrm{t}}
\def \hrx{h_\mathrm{r}}
\def \hb{\mathrm{h_b}}
\def \hc{\mathrm{h_g}}
\def \pl{\mathcal{P}_\mathrm{L}}
\def \pn{\mathcal{P}_\mathrm{N}}
\def \pup{\mathcal{P}_\upsilon}
\def \px{\mathcal{P}_\xi}
\def \pluu{\mathcal{P}_\mathrm{L}^\mathrm{uu}}
\def \ml{m_\mathrm{L}}
\def \mn{m_\mathrm{N}}
\def \m{m_\upsilon}
\def \muu{\mathrm{m_{uu}}}
\def \muul{\mathrm{m_{uu}^L}}
\def \muun{\mathrm{m_{uu}^N}}
\def \mcb{\mathrm{m_{gb}}}
\def \mcbl{\mathrm{m_{gb}^L}}
\def \mcbn{\mathrm{m_{gb}^N}}
\def \mub{\mathrm{m_{ub}}}
\def \mubl{\mathrm{m_{ub}^L}}
\def \mubn{\mathrm{m_{ub}^N}}
\def \mcul{\mathrm{m_{gu}^L}}
\def \mcun{\mathrm{m_{gu}^N}}
\def \nl{n_\mathrm{L}}
\def \nn{n_\mathrm{N}}
\def \nx{n_\xi}
\def \n{n_\upsilon}
\def \sl{\psi_\mathrm{L}}
\def \sn{\psi_\mathrm{N}}
\def \su{\psi_\upsilon}
\def \sx{\psi_\xi}
\def \ol{\omega_\mathrm{L}}
\def \on{\omega_\mathrm{N}}
\def \ou{\omega_\upsilon}
\def \ox{\omega_\xi}
\def \il{I_\mathrm{L}}
\def \iN{I_\mathrm{N}}
\def \yl{y_\mathrm{L}}
\def \yn{y_\mathrm{N}}
\def \yu{y_\upsilon}
\def \lapi{\mathcal{L}_I}
\def \lapil{\mathcal{L}_{I_\mathrm{L}}}
\def \lapin{\mathcal{L}_{I_\mathrm{N}}}
\def \lapix{\mathcal{L}_{I_\xi}}
\def \lapic{\mathcal{L}_{I_\mathrm{g}}}
\def \lapiu{\mathcal{L}_{I_\mathrm{u}}}
\def \Ul{\Upsilon_\mathrm{L}}
\def \Un{\Upsilon_\mathrm{N}}
\def \Ux{\Upsilon_\xi}
\def \Uxu{\Upsilon_\xi^{\,\upsilon}}
\def \Unu{\Upsilon_\mathrm{N}^\upsilon}
\def \frl{f_{R_\mathrm{b}}^\mathrm{L}}
\def \frn{f_{R_\mathrm{b}}^\mathrm{N}}
\def \frx{f_{R_\mathrm{b}}^{\,\xi}}
\def \laml{\lambda_\mathrm{L}}
\def \lamn{\lambda_\mathrm{N}}

\def \t{\mathrm{T}}
\def \i{I}
\def \ic{I_\mathrm{g}}
\def \icl{I_\mathrm{g}^\mathrm{L}}
\def \icn{I_\mathrm{g}^\mathrm{N}}
\def \iu{I_\mathrm{u}}
\def \iul{I_\mathrm{u}^\mathrm{L}}
\def \iun{I_\mathrm{u}^\mathrm{N}}
\def \ful{\Phi_\mathrm{u}^\mathrm{L}}
\def \fun{\Phi_\mathrm{u}^\mathrm{N}}
\def \fu{\Phi_\mathrm{u}}
\def \fcl{\Phi_\mathrm{g}^\mathrm{L}}
\def \fcN{\Phi_\mathrm{g}^\mathrm{N}}
\def \fc{\Phi_\mathrm{g}}

\def \gb{\mathrm{G_b}}
\def \gu{\mathrm{G_{u}}}
\def \gb{\mathrm{G_b}}
\def \tt{\theta_\mathrm{t}}
\def \tb{\theta_\mathrm{b}}

\def \lamci{\hat{\lambda}_\mathrm{g}}
\def \lamb{\lambda_\mathrm{b}}
\def \lamu{\lambda_\mathrm{u}}
\def \lamuhat{\hat{\lambda}_\mathrm{u}}
\def \lamg{\lambda_\mathrm{g}}
\def \lamul{\lambda_\mathrm{u}^\mathrm{L}}
\def \lamun{\lambda_\mathrm{u}^\mathrm{N}}
\def \lamcl{\lambda_\mathrm{g}^\mathrm{L}}

\def \iccl{I_\mathrm{gg}^\mathrm{L}}
\def \iccn{I_\mathrm{gg}^\mathrm{N}}
\def \icc{I_\mathrm{gg}}
\def \icu{I_\mathrm{gu}}
\def \iuc{I_\mathrm{ug}}
\def \iuu{I_\mathrm{uu}}
\def \iuul{I_\mathrm{uu}^\mathrm{L}}
\def \iuun{I_\mathrm{uu}^\mathrm{N}}
\def \icul{I_\mathrm{gu}^\mathrm{L}}
\def \icun{I_\mathrm{gu}^\mathrm{N}}
\def \iucl{I_\mathrm{ug}^\mathrm{L}}
\def \iucn{I_\mathrm{ug}^\mathrm{N}}
\def \lapiccl{\mathcal{L}_{\iccl}}
\def \lapiccn{\mathcal{L}_{\iccn}}
\def \lapicul{\mathcal{L}_{\icul}}
\def \lapicun{\mathcal{L}_{\icun}}
\def \lapicu{\mathcal{L}_{\icu}}
\def \lapiuc{\mathcal{L}_{\iuc}}
\def \lapicc{\mathcal{L}_{\icc}}
\def \lapiucl{\mathcal{L}_{\iucl}}
\def \lapiucn{\mathcal{L}_{\iucn}}
\def \lapiuu{\mathcal{L}_{\iuu}}
\def \lapiuul{\mathcal{L}_{\iuul}}
\def \lapiuun{\mathcal{L}_{\iuun}}
\def \pcb{\mathsf{p}_\mathrm{gb}}
\def \pub{\mathsf{p}_\mathrm{ub}}
\def \pcu{\mathsf{p}_\mathrm{gu}}
\def \pruu{\mathsf{p}_\mathrm{uu}}
\def \prcb{\mathsf{p}_\mathrm{gb}}
\def \fici{\hat{\Phi}_c}
\def \tl{\tau_\mathrm{L}}
\def \tn{\tau_\mathrm{N}}
\def \acbl{\alpha_{\mathrm{gb}}^\mathrm{L}}
\def \acbn{\alpha_{\mathrm{gb}}^\mathrm{N}}
\def \acul{\alpha_{\mathrm{gu}}^\mathrm{L}}
\def \acun{\alpha_{\mathrm{gu}}^\mathrm{N}}
\def \acb{\alpha_{\mathrm{gb}}}
\def \alcb{\alpha_{\mathrm{gb}}^\mathrm{L}}
\def \ancb{\alpha_{\mathrm{gb}}^\mathrm{N}}
\def \aubl{\alpha_{\mathrm{ub}}^\mathrm{L}}
\def \aubn{\alpha_{\mathrm{ub}}^\mathrm{N}}
\def \auu{\alpha_{\mathrm{uu}}}
\def \auul{\alpha_{\mathrm{uu}}^\mathrm{L}}
\def \auun{\alpha_{\mathrm{uu}}^\mathrm{N}}
\def \zl{\zeta_\mathrm{L}}
\def \zn{\zeta_\mathrm{N}}
\def \ec{\epsilon_\mathrm{g}}
\def \eu{\epsilon_\mathrm{u}}
\def \subl{\psi_\mathrm{ub}^\mathrm{L}}
\def \scbl{\psi_\mathrm{gb}^\mathrm{L}}
\def \scbn{\psi_\mathrm{gb}^\mathrm{N}}
\def \scul{\psi_\mathrm{gu}^\mathrm{L}}
\def \suu{\psi_\mathrm{uu}}
\def \suul{\psi_\mathrm{uu}^\mathrm{L}}
\def \suun{\psi_\mathrm{uu}^\mathrm{N}}
\def \zubl{\zeta_\mathrm{ub}^\mathrm{L}}
\def \zubn{\zeta_\mathrm{ub}^\mathrm{N}}
\def \zcul{\zeta_\mathrm{cu}^\mathrm{L}}
\def \zuu{\zeta_\mathrm{uu}}
\def \zuul{\zeta_\mathrm{uu}^\mathrm{L}}
\def \zuun{\zeta_\mathrm{uu}^\mathrm{N}}
\def \zcb{\zeta_\mathrm{gb}}
\def \zcbl{\zeta_\mathrm{gb}^\mathrm{L}}
\def \zcbn{\zeta_\mathrm{gb}^\mathrm{N}}
\def \pur{\rho_\mathrm{u}}
\def \pcr{\rho_\mathrm{g}}
\def \guu{\mathrm{g_{uu}}}
\def \gcu{\mathrm{g_{gu}}}
\def \gub{\mathrm{g_{ub}}}
\def \gubi{\mathrm{g_{ub}^{(i)}}}
\def \gcb{\mathrm{g_{gb}}}
\def \scb{\psi_\mathrm{gb}}
\def \hub{\mathrm{h_{ub}}}
\def \hcb{\mathrm{h_{gb}}}
\def \hcu{\mathrm{h_{gu}}}
\def \dub{d_\mathrm{{ub}}}
\def \dcb{d_\mathrm{{gb}}}
\def \dcu{d_\mathrm{{gu}}}
\def \publ{\Psi_\mathrm{ub}^\mathrm{L}}
\def \pcul{\Psi_\mathrm{gu}^\mathrm{L}}
\def \pcb{\Psi_\mathrm{gb}}
\def \pcbl{\Psi_\mathrm{gb}^\mathrm{L}}
\def \pcbn{\Psi_\mathrm{gb}^\mathrm{N}}
\def \puu{\Psi_\mathrm{uu}}
\def \puul{\Psi_\mathrm{uu}^\mathrm{L}}
\def \puun{\Psi_\mathrm{uu}^\mathrm{N}}
\def \buu{\beta_\mathrm{uu}}
\def \bcb{\beta_\mathrm{gb}}
\def \bubl{\beta_\mathrm{ub}^\mathrm{L}}
\def \Bx{\mathrm{B_x}}
\def \Cx{\mathcal{C}_\mathrm{x}}
\def \Rx{\mathcal{R}_\mathrm{x}}
 
\def \hb{\mathrm{h}_{\mathrm{b}}}
\def \hu{\mathrm{h}_{\mathrm{u}}}
\def \hg{\mathrm{h}_{\mathrm{g}}}
\def \hx{\mathrm{h}_{\mathrm{x}}}
\def \hy{\mathrm{h}_{\mathrm{y}}}
\def \hxy{\mathrm{h}_{\mathrm{xy}}}
\def \x{\mathrm{x}}
\def \y{\mathrm{y}}
\def \zetaxy{\zeta_{\mathrm{xy}}}
\def \tauxy{\tau_{\mathrm{xy}}}
\def \tauoxy{\tau_{0,\mathrm{xy}}}
\def \gxy{g_{\mathrm{xy}}}
\def \psixy{\psi_{\mathrm{xy}}}
\def \psixyL{\psixy^{\mathrm{L}}}
\def \psixyN{\psixy^{\mathrm{N}}}
\def \L{\mathrm{L}}
\def \N{\mathrm{N}}
\def \a{\mathrm{a}}
\def \x{\mathrm{x}}
\def \y{\mathrm{y}}
\def \dxy{d_\mathrm{xy}}
\def \rxy{r_\mathrm{xy}}
\def \alphaxy{\alpha_\mathrm{xy}}
\def \alphaxyL{\alphaxy^\mathrm{L}}
\def \alphaxyN{\alphaxy^\mathrm{N}}
\def \Px{P_{\mathrm{x}}}
\def \Pxmax{P_{\mathrm{x}}^{\textrm{max}}}
\def \rhox{\rho_{\mathrm{x}}}
\def \epsx{\epsilon_{\mathrm{x}}}
\def \mxy{\mathrm{m}_{\mathrm{xy}}}
\def \mxyL{\mxy^{\mathrm{L}}}
\def \mxyN{\mxy^{\mathrm{N}}}
\def \Ixy{I_{\mathrm{xy}}}
\def \Phib{\Phi_{\mathrm{b}}}
\def \Phig{\Phi_{\mathrm{g}}}
\def \Phiu{\Phi_{\mathrm{u}}}
\def \Phihatg{\hat{\Phi}_{\mathrm{g}}}
\def \Phihatgb{\hat{\Phi}_{\mathrm{gb}}}
\def \Phihatgu{\hat{\Phi}_{\mathrm{gu}}}


\def \pcovuav{\mathcal{C}_\mathrm{u}}
\def \pcovc{\mathcal{C}_\mathrm{g}}
\def \rMuu{\mathrm{r_M}}
\def \ru{\mathrm{r_u}}
\def \Ru{R_\mathrm{u}}
\def \u{\mathrm{u}}
\def \c{\mathrm{g}}
\def \q{\mathrm{q}}
\def \r{\mathrm{r}}
\def \m{\mathrm{m}}
\def \prxy{\mathsf{p}_\mathrm{xy}}
\def \prxyxi{\mathsf{p}_\mathrm{xy}^\xi}
\def \pl{\mathcal{P}^\mathrm{L}}
\def \pn{\mathcal{P}^\mathrm{N}}
\def \rc{\mathrm{r_g}}
\def \d{\mathrm{d}}
\def \L{\mathrm{L}}
\def \N{\mathrm{N}}
\def \pu{P_\mathrm{u}}
\def \pumax{\mathrm{P_u^{max}}}
\def \lapixyxi{\mathcal{L}_{I_\mathrm{xy}^\xi}}
\def \px{P_\mathrm{x}}
\def \sy{\mathrm{s_y}}
\def \sixy{\psi_\mathrm{xy}}
\def \zetxy{\zeta_\mathrm{xy}}
\def \prxi{\mathcal{P}^\xi}
\def \s{\mathrm{s}}
\def \axy{\alpha_\mathrm{xy}}
\def \bxy{\beta_\mathrm{xy}}
\def \hxy{\mathrm{h_{xy}}}
\def \Rc{R_\mathrm{g}}
\def \pc{P_\mathrm{g}}
\def \lamc{\lambda_\c}
\def \D{\mathrm{D}}
\def \sigmau{\sigma_\mathrm{u}}
\def \g{\mathrm{g}}


\newtheorem{Theorem}{\bf Theorem}
\newtheorem{Corollary}{\bf Corollary}
\newtheorem{Remark}{\bf Remark}
\newtheorem{Lemma}{\bf Lemma}
\newtheorem{Proposition}{\bf Proposition}
\newtheorem{Assumption}{\bf Assumption}
\newtheorem{Approximation}{\bf Approximation}
\newtheorem{Definition}{\bf Definition}

\title{UAV-to-UAV Communications in Cellular Networks}
\author{{M.~Mahdi~Azari, Giovanni~Geraci, Adrian~Garcia-Rodriguez, and Sofie~Pollin}
\thanks{M.~M.~Azari was with KU Leuven, Belgium. He is now with Centre Tecnològic de Telecomunicacions de Catalunya (CTTC), Barcelona, Spain. G.~Geraci is with Universitat Pompeu Fabra, Barcelona, Spain. A.~Garcia-Rodriguez is with Nokia Bell Labs, Dublin, Ireland. Sofie~Pollin is with KU Leuven, Belgium.}
\thanks{Part of the material in this paper was presented at IEEE PIMRC'19 \cite{AzaGerGar19}. The work of M.~M.~Azari was partly supported by the Catalan government under grant 2017 SGR1479. The work of G.~Geraci was partly supported by MINECO under Project RTI2018-101040-A-I00 and by the Postdoctoral Junior Leader Fellowship Programme from ``la Caixa" Banking Foundation.}
}

\maketitle
\thispagestyle{empty}
\begin{abstract}
We consider a cellular network deployment where UAV-to-UAV (U2U) transmit-receive pairs share the same spectrum with the uplink (UL) of cellular ground users (GUEs). For this setup, we focus on analyzing and comparing the performance of two spectrum sharing mechanisms: (i) underlay, where the same time-frequency  resources may be accessed by both UAVs and GUEs, resulting in mutual interference, and (ii) overlay, where the available resources are divided into orthogonal portions for U2U and GUE communications. We evaluate the coverage probability and rate of both link types and their interplay to identify the best spectrum sharing strategy. We do so through an analytical framework that embraces realistic height-dependent channel models, antenna patterns, and practical power control mechanisms. For the underlay, we find that although the presence of U2U direct communications may worsen the uplink performance of GUEs, such effect is limited as base stations receive the power-constrained UAV signals through their antenna sidelobes. In spite of this, our results lead us to conclude that in urban scenarios with a large number of UAV pairs, adopting an overlay spectrum sharing seems the most suitable approach for maintaining a minimum guaranteed rate for UAVs and a high GUE UL performance.
\end{abstract}
\IEEEpeerreviewmaketitle
\begin{IEEEkeywords}
UAV-to-UAV communications, D2D communications, cellular networks, spectrum sharing, stochastic geometry.
\end{IEEEkeywords}
\section{Introduction}
\label{sec:Intro}
The telecommunications industry and academia have long agreed on the social benefits that can be brought by having cellular-connected unmanned aerial vehicles (UAVs) \cite{AzaGerGar19,geraci2019preparing, Qualcomm2017,fotouhi2018survey, MozSaaBen2018, vinogradov2019tutorial}. These include facilitating search-and-rescue missions, acting as mobile small cells for providing coverage and capacity enhancements, or automating logistics in indoor warehouses \cite{Cbinsights:19,ZenZhaLim2016,azari2017ultra}. From a business standpoint, mobile network operators may benefit from offering cellular coverage to a heterogeneous population of terrestrial and aerial users \cite{Ericsson:18,YanLinLi18,AzaRosPol2017}.

\subsection{Motivation and Related Work}

A certain consensus has been reached---both at 3GPP meetings and in the classroom---on the fact that present-day networks will be able to support cellular-connected UAVs up to a certain extent \cite{azari2017coexistence,LopDinLi2018GC,NguAmoWig2018,ZenLyuZha2018,MeiWuZhang2018,azari2019cellular}. Besides, recent studies have shown that 5G-and-beyond hardware and software upgrades may be required by both mobile operators and UAV manufacturers to target large populations of UAVs flying at high altitudes \cite{GarGerLop2018,GerGarGal2018,ChaDanLar2017,DanGarGerICC2019}.

However, important use-cases exist where direct communication between UAVs, bypassing ground network infrastructure, would be a key enabler. These include autonomous flight of UAV swarms, collision avoidance, and UAV-to-UAV relaying, data transfer, and gathering \cite{zeng2019accessing,ZhaZhaDi19,FabCalCan17}. Similarly to ground device-to-device (D2D) communications \cite{LinAndGho2014,ChuCotDhi2017,GeoMunLoz2015,RimDar2017,AsaWanMan2014}, UAV-to-UAV (U2U) communications may also have implications in terms of spectral and energy efficiencies, extended cellular coverage, and reduced backhaul demands.

\subsection{Methodology and Contribution}

In this paper, we consider a cellular network deployment where UAV transmit-receive pairs share the same spectrum with the uplink (UL) of cellular ground users (GUEs). We examine two strategies for spectrum sharing, namely \emph{underlay} and \emph{overlay}. In the underlay, UAVs are allowed to access a fraction of the time-frequency physical resource blocks (PRBs) available for the GUE UL, resulting in mutual interference. In the overlay, the available PRBs are split into two orthogonal portions, respectively reserved for each link type.

Through stochastic geometry tools, we characterize the performance of U2U links and GUE UL, as well as their interplay, under both spectrum sharing mechanisms. Specifically, we evaluate the impact that the UAV altitude, UAV density, UAV power control, U2U link distance, and the number of PRBs accessed by each link type have on the coexistence of aerial and ground communications. To the best of our knowledge, this work is the first one to do so by accounting for: (i) a realistic, height-dependent propagation channel model, (ii) the impact of a practical base station (BS) antenna pattern, and (iii) a fractional power control policy implemented by all nodes.

Under such realistic setup, we first obtain exact analytical expressions for the coverage probability, i.e., the signal-to-interference-plus-noise ratio (SINR) distribution, of all links with both underlay and overlay approaches. As these expressions may require a considerable effort to be numerically evaluated, we also propose tight approximations based on practical assumptions. We validate both our exact and approximated analysis through simulations, and provide numerical results to gain insights into the behavior of U2U communications in cellular networks.

\subsection{Summary of Results}

Our main takeaways can be summarized as follows.
\begin{itemize}[leftmargin=*]
\item \textit{Link interplay:} In the underlay, the presence of U2U links may degrade the GUE UL. Such performance loss is limited by the fact that BSs perceive interfering UAVs through their antenna sidelobes, and UAVs can generally transmit at low power thanks to the favorable U2U channel conditions. However, the performance of both U2U and GUE UL links worsens as UAVs fly higher. This is due to an increased probability of line-of-sight (LoS)---and hence interference---on all UAV-to-UAV, GUE-to-UAV, and UAV-to-BS interfering links. Such negative effect outweighs the benefits brought by having larger GUE-to-UAV and UAV-to-BS distances.
\item \textit{Power control policy:} In the underlay, the UAV power control policy has a significant impact on all links. A tradeoff exists between the performance of U2U and GUE UL communications, whereby increasing the UAV transmission power improves the former at the expense of the latter. Moreover, smaller U2U distances can benefit both U2U and GUE UL links. Indeed, owed to the reduced path loss experienced by U2U pairs, UAVs may employ a smaller transmission power and therefore reduce the interference they cause to other U2U links and to GUEs.
\item \textit{Spectrum allocation:} In the underlay, where GUE-to-UAV interference is dominant, the rate degradation at UAVs caused by increasing their density is limited. However, increasing the number of PRBs utilized by U2U pairs causes a sharp performance degradation for GUEs, unless both the UAV density and the UAV transmission powers are limited. Implementing an overlay spectrum sharing approach may be the best option in order to maintain a high GUE UL performance while guaranteeing a minimum rate of 100~kbps to the majority of U2U pairs.
\end{itemize}

\subsection{Article Outline}

The remainder of this article is structured as follows. We introduce the system model in Section~II. In Section~III, we analyze the exact coverage probability of U2U and GUE UL links under underlay and overlay spectrum sharing. In Section~IV, we derive more compact, tight approximations for the coverage probability based on realistic assumptions. We show numerical results in Section~V to validate our analysis and approximations, and we provide several takeaways to the reader. We summarize our findings in Section~VI.


\section{System Model}
\label{sec:System_Model}

\begin{figure*}[!t]
\centering
\includegraphics[width=1.85\columnwidth]{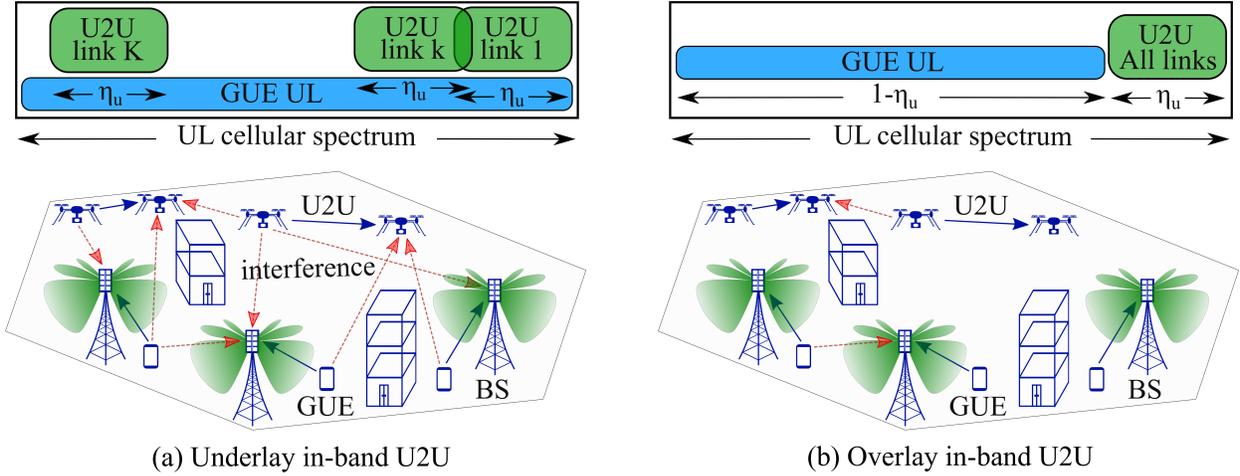}
	\caption{U2U communications sharing spectrum with the cellular UL. Blue solid (resp. red dashed) arrows indicate communication (resp. interfering) links. In (a)---underlay in-band U2U---GUEs occupy the whole spectrum while UAVs occupy a fraction $\eta_\u$, where mutual GUE-U2U interference occurs. In (b)---overlay in-band U2U---the spectrum is split into orthogonal portions, with a fraction $\eta_\u$ reserved to UAVs.}
	\label{U2U_SystemModel}
\end{figure*}

In this section, we introduce the network topology, channel model, spectrum sharing, and power control mechanisms considered throughout the paper. The main notations employed are summarized in Table~\ref{table:notation}, whereas further details on the parameters used in our study are provided in Table~\ref{table:parameters}.

\subsection{Network Topology}

We consider a cellular system as depicted in Fig.~\ref{U2U_SystemModel}, where (i) the UL transmissions of GUEs, and (ii) U2U transmit-receive pairs reuse the same spectrum. In the sequel, we employ the subscripts $\{\u,\c,\mathrm{b}\}$ to denote UAV, GUE, and BS nodes, respectively.

\subsubsection*{i) Ground UL cellular communications}
The BSs of the ground cellular network are deployed at a height $\hb$, are uniformly distributed as a Poisson point process (PPP) $\Phib \in \mathbb{R}^2$ with density $\lamb$, and communicate with their respective sets of connected GUEs. Assuming that the number of GUEs is sufficiently large when compared to that of the BSs, the active GUEs on each PRB form an independent Poisson point process $\Phig \in \mathbb{R}^2$ with density $\lamg=\lamb$ \cite{ChuCotDhi2017}. We further consider that GUEs associate to their closest BS, which generally also provides the largest reference signal received power (RSRP).\footnote{A GUE may connect to a BS $b$ other than the closest one $a$ if its link is in LoS with $b$ and not with $a$. However, since the probability of LoS decreases with the distance, such event is unlikely to occur \cite{3GPP36777}.} Therefore, the 2-D distance between a GUE and its associated BS follows a Rayleigh distribution with a scale parameter given by $\sigma_\c = 1/\sqrt{2\pi \lamc}$. When focusing on a typical BS serving its associated GUE, the interfering GUEs form a non-homogeneous PPP with density $\lamci(r) = \lamb(1-e^{-\lamb \pi r^2})$, where $r$ is the 2-D distance between the interfering GUE and the typical BS \cite{ChuCotDhi2017,SinZhaAnd:15,YanGerQue:16}.

\subsubsection*{ii) Direct UAV-to-UAV communications}
We consider that U2U transmitters form a PPP $\Phiu$ with intensity $\lamu$, and that each U2U receiver is randomly and independently placed around its associated transmitter with distance $\Ru$ distributed as $f_{\Ru}(\ru)$.

\subsection{Spectrum Sharing Mechanisms}


Let the available spectrum be divided into $\mathrm{n}$ PRBs. We consider the two spectrum sharing strategies---underlay and overlay--- illustrated in Fig.~\ref{U2U_SystemModel} and described as follows.

\subsubsection*{Underlay in-band U2U}

Each PRB may be used by both link types \cite{LinAndGho2014}. In particular, we assume that:
\begin{itemize}[leftmargin=*]
\item Each active GUE occupies all $\mathrm{n}$ PRBs. This is consistent with a cellular operator's goal of preserving the performance of its legacy ground users \cite{GarGerLop2018,GerGarGal2018}.
\item Each U2U transmitter occupies a fraction $\eta_\u$ of all PRBs, also employing frequency hopping to randomize its interference to other links. Specifically, each U2U transmitter may randomly and independently access $\eta_\u \cdot \mathrm{n}$ PRBs, where the factor $\eta_\u \in [0,1]$ measures the aggressiveness of the U2U spectrum access, and is denoted the spectrum access factor in the underlay. As a result, the density of interfering UAVs is given by $\hat{\lambda}_\u = \eta_\u \cdot \lamu$.
\end{itemize}

\subsubsection*{Overlay in-band U2U}

The available UL spectrum is split into two orthogonal portions. A fraction $\eta_\u$ is reserved for U2U communications, and UAVs access all $\eta_\u \cdot \mathrm{n}$ allocated PRBs without frequency hopping. Similarly, the remaining fraction $\eta_\g = 1 - \eta_\u$ is reserved to the GUEs UL, and active GUEs access all $\eta_\g \cdot \mathrm{n}$ PRBs allocated. This approach results in each GUE UL link being interfered only by other GUEs, and in each U2U link being interfered only by other UAVs.

In scenarios with the same number of UAVs, it is worth noting that UAVs will perceive more UAV-generated interference in the overlay when compared to the underlay, since all UAV pairs utilize the same PRBs. Accordingly, GUEs receive no interference from the UAVs in the overlay, at the expense of having to access only a subset of the available PRBs.

\subsection{Propagation Channel}

We assume that any radio link between nodes $\x$ and $\y$ is affected by large-scale fading $\zetaxy$, comprising path loss $\tauxy$ and antenna gain $\gxy$, and small-scale fading $\psixy$.

\subsubsection*{Probability of LoS}
We consider that links experience line-of-sight (LoS) and non-LoS (NLoS) propagation conditions with probabilities $\prxy^\L$ and $\prxy^\N$, respectively. In what follows, we make use of the superscripts $\nu,\xi \in \{\mathrm{L},\mathrm{N}\}$ to denote LoS and NLoS conditions on a certain link.


\subsubsection*{Path loss}
The distance-dependent path loss between two nodes $\mathrm{x}$ and $\mathrm{y}$ is given by
\begin{equation}
\tauxy = \hat{\tau}_\mathrm{xy} \, \dxy^{\,\alphaxy},
\end{equation}
where $\hat{\tau}_\mathrm{xy}$ denotes the reference path loss, $\alphaxy$ is the path loss exponent, and $\dxy = \sqrt{\rxy^2 + \hxy^2}$, $\rxy$, and $\hxy = |\hx - \hy|$ represent the 3-D distance, 2-D distance, and height difference between $\mathrm{x}$ and $\mathrm{y}$, respectively. Table~\ref{table:parameters} lists the path loss parameters employed in our study, which depend on the nature of $\x$ and $\y$.

\subsubsection*{Antenna gain}
We assume that all GUEs and UAVs are equipped with a single omnidirectional antenna with unitary gain. On the other hand, we consider a realistic BS antenna radiation pattern to capture the effect of sidelobes, which is of particular importance in UAV-to-BS links \cite{azari2017coexistence,GerGarGal2018}. We assume that each BS is equipped with a vertical, $\mathrm{N}$-element uniform linear array (ULA), where each element has directivity
\begin{equation} \label{ElementGain}
g_E(\theta) = g_E^{\max} \sin^2\theta
\end{equation}
as a function of the zenith angle $\theta$. The total BS radiation pattern $g_b(\theta) = g_E(\theta)\cdot g_A(\theta)$ is obtained as the superposition of each element's radiation pattern $g_E(\theta)$ and by accounting for the array factor given by
\begin{equation}
g_A(\theta) = \frac{\sin^2\Big(N\pi (\cos\theta - \cos\tt)/2\Big)}{N\sin^2\Big(\pi (\cos\theta - \cos\tt)/2\Big)},	
\end{equation}
where $\tt$ denotes the electrical downtilt angle.
The total antenna gain $\gxy$ between a pair of nodes $\x$ and $\y$ is given by the product of their respective antenna gains.

\subsubsection*{Small-scale fading}
On a given PRB, $\sixy$ denotes the small-scale fading power between nodes $\x$ and $\y$. Given the different propagation features of ground-to-ground, air-to-air, and air-to-ground links, we adopt the general Nakagami-m small-scale fading model. As a result, the cumulative distribution function (CDF) of $\sixy$ is given by
\begin{equation} \label{FadingCDF}
F_{\sixy}(\omega) \triangleq \mathbb{P}[\sixy < \omega] \!=\! 1\!-\!\sum_{i=0}^{\mxy-1} \!\frac{(\mxy \omega)^i}{i!} e^{-\mxy \omega},\!
\end{equation}
where $\mxy \in \mathbb{Z}^{+}$ is the fading parameter, with LoS links typically exhibiting a larger value of $\mxy$ than NLoS links.

\subsection{Power Control}

As per the cellular systems currently deployed, we consider fractional power control for all nodes. Accordingly, the power transmitted per PRB by a given node $\x$ is adjusted depending on the receiver $\y$ and can be computed as \cite{BarGalGar2018GC}
\begin{equation}
\Px = \min\left\{ \Pxmax, \rhox \cdot \zetaxy^{\epsx} \right\},
\label{eqn:power_control}
\end{equation}
where $\Pxmax$ is given by the maximum transmit power over the whole spectrum allocated to the node, divided by the number of PRBs utilized by node $\x$ for transmission, i.e., $P_{\mathrm{max}} / \mathrm{n_x}$. In (\ref{eqn:power_control}), $\rhox$ is a parameter adjusted by the network, $\epsx \in [0,1]$ is the fractional power control factor, and $\zetaxy = \tauxy /\gxy$ is the large-scale fading between nodes $\x$ and $\y$. The aim of (\ref{eqn:power_control}) is to compensate for a fraction $\epsx$ of the large-scale fading, up to a limit imposed by $\Pxmax$ \cite{3GPP36777}.

\subsection{Key Performance Indicators}

In what follows, we will analyze the coverage probability, denoted by $\mathcal{C}_\x$ for node $\x$. This is defined as the complementary CDF (CCDF) of the SINR, i.e., the probability of the SINR at node $\x$, SINR$_\x$, being beyond a certain threshold $\t$:
\begin{equation}
	\mathcal{C}_\x(\t) \triangleq \mathbb{P}\{\sinr_\x>\t\}.
\end{equation}
The rate $\Rx$ achievable by node $\x$ is related to its SINR as $\Rx = \Bx \log_2(1+\sinr_\x)$, with $\Bx$ denoting the bandwidth accessed by node $\x$. From the coverage probability, the coverage rate probability can be obtained as the CCDF of the achievable rate $\Rx$ at node $\x$ \cite{bai2015coverage}:
\begin{equation}
\mathbb{P}[\Rx>\t]=\Cx(2^{\t/\Bx}-1).
\end{equation}

%
\begin{table}
\centering
\caption{Notations.}
\label{table:notation}
\def\arraystretch{1.2}
\begin{tabulary}{\columnwidth}{ |p{1.7cm} | p{5.6cm} | }
\hline
	\textbf{Notation} 			 & \textbf{Definition} \\ \hline
$\lamb$ ($\lamu$)          & BS (UAV) density  \\\hline
$\bar{R}_\mathrm{u}$ ($\sigma_\u$) & mean (scale parameter) of U2U distance \\ \hline
$\rMuu$ & maximum U2U distance\\ \hline
$\prxy^\L$ ($\prxy^\N$)                                       & probability of LoS (NLoS) between x and y  \\\hline
$\nu,~\xi \in \{\mathrm{L},\mathrm{N}\}$ & superscripts denoting LoS or NLoS condition \\ \hline
$\alphaxyL$ ($\alphaxyN$)            & LoS (NLoS) path loss exponent for x--y link   \\\hline
$\psixyL$ ($\psixyN$)                & LoS (NLoS) small-scale fading for x--y link  \\\hline
$\mxyL$ ($\mxyN$)                    & LoS (NLoS) Nakagami-m parameter for x--y link   \\\hline
$\gxy$  & total antenna gain for x--y link \\\hline
$\hat{\tau}_\mathrm{xy}^\L$ ($\hat{\tau}_\mathrm{xy}^\N$) & LoS (NLoS) reference path loss\\ \hline
$\rxy$ ($\dxy$)                      & 2-D (3-D) distance for x--y link   \\\hline
$\hx$ ($\hxy$)  & height of node x (difference between $\hx$ and $\hy$)  \\\hline
$\pcovuav$ ($\pcovc$)      & U2U (GUE) coverage probability \\ \hline
$\mathrm{T}$                                       & SINR threshold  \\\hline
$\mathrm{B_t}$ ($\mathrm{n}$) & total bandwidth (number of PRBs) \\ \hline
$\mathrm{B_x}$ ($\eta_\x$) & bandwidth (spectrum allocation factor) for  x  \\ \hline
$\pu$ ($\pc$)              & UAV (GUE) transmit power    \\\hline
$\pur$ ($\pcr$)            & reference value for UAV (GUE) power control     \\\hline
$\eu$ ($\ec$)              & UAV (GUE) power control factor    \\\hline
$\tt$ ($N$)  & BS tilt angle (number of antenna elements) \\ \hline
$\Ixy$ & aggregate interference imposed by x on y \\ \hline
$\mathrm{N_0}$ & noise power \\ \hline
$\gamma(\cdot,\cdot)$                            & lower incomplete gamma function    \\\hline
$\Gamma(\cdot)$                                  & Gamma function   \\\hline
${_2}\mathrm{F}_1(\cdot,\cdot;\cdot;\cdot)$      & hypergeometric function   \\\hline
$\mathds{1}(\cdot)$ & indicator function \\ \hline
$\D_z^i$ & i-th derivative with respect to $z$ \\ \hline
\end{tabulary}
\end{table}
\section{Exact Performance Analysis}
\label{sec:analysis_exact}

Our U2U (resp. GUE UL) performance analysis is conducted for a typical BS (resp. UAV) receiver located at the origin.
In what follows, uppercase and lowercase letters are employed to respectively denote random variables and their realizations, e.g., $\Ru$ and $\ru$.

\subsection{Exact U2U Coverage Probability}

\subsubsection*{Underlay in-band U2U}

We now derive the U2U link coverage probability in the underlay.

\begin{Theorem} \label{U2Ucov_theorem}
	The underlay U2U coverage probability can be obtained as
	\begin{align} \label{eq:U2ULinkCoverage}
	\pcovuav(\t) = \sum_{\nu \in \{\mathrm{L},\mathrm{N}\}}\int_0^\rMuu f_{\Ru}^\nu(\ru) \mathcal{C}_{\mathrm{u}|\Ru}^\nu(\ru)  \mathrm{d}\ru.
	\end{align}
	In \eqref{eq:U2ULinkCoverage}, $\mathcal{C}_{\mathrm{u}|\Ru}^\nu(\ru)$ is the coverage probability of a U2U link given its distance $\Ru = \ru$ and the link condition $\nu$ (LoS or NLoS), which is obtained as
	\begin{align} \label{eq:U2ULinkCondCoverage}
	\mathcal{C}_{\mathrm{u}|\Ru}^\nu(\ru) = \sum_{i=0}^{\mathrm{m_{uu}^\nu}-1} (-1)^i \q_{\u,i}^\nu \cdot \D^i_{\s_\u} \left[\lapiu^\nu(\mathrm{s_u})\right],
	\end{align}
	where
	\begin{align}
	\q_{\u,i}^\nu &\triangleq \frac{e^{-\mathrm{N_0}\mathrm{s_u}}}{i!}\sum_{j=i}^{\mathrm{m_{uu}^\nu}-1} \frac{\mathrm{N_0}^{j-i}\mathrm{s_u}^j}{(j-i)!}, \\
	\!\!\mathrm{s_u} &\triangleq \frac{\mathrm{m_{uu}^\nu} \t}{\pu^\nu(\ru) \zuu^\nu(\ru)^{-1}}.
	\end{align}
	In \eqref{eq:U2ULinkCondCoverage}, $\iu$ is the aggregate interference at the UAV receiver caused by interfering UAVs and GUEs and is characterized by its Laplacian, obtained as $\lapiu^\nu(\mathrm{s_u}) = e^{\Lambda(\mathrm{s_u})}$ with
	\begin{equation}
	\Lambda(\mathrm{s_u}) \!=\! -2 \pi\!\left[ \lamuhat \!\!\!\!\sum_{\xi \in \{\mathrm{L},\mathrm{N}\}}\!\!\mathcal{I}_\mathrm{uu}^\mathrm{\xi}(\mathrm{s_u})  \!+\! \lamb \!\!\!\!\sum_{\xi \in \{\mathrm{L},\mathrm{N}\}}\mathcal{I}_\mathrm{gu}^\mathrm{\xi}(\mathrm{s_u}) \right]\!,
	\end{equation}
	where for $\xi \in \{\mathrm{L},\mathrm{N}\}$
	\begin{equation}
	\begin{aligned}
	\mathcal{I}_\mathrm{xy}^\xi &= \int_0^\infty \!\hspace{-0.2cm} f_{R_\mathrm{x}}^\mathrm{L}(x)\sum_{i = 1}^\infty \left[\prxy^{\xi}(\r_{i-1})\!-\!\prxy^{\xi}(\r_{i})\right] \underbrace{\Psi_\mathrm{xy}^\xi\left(\mathrm{s},\r_{i}\right)}_\text{at $P_\x = P_\mathrm{x}^\mathrm{L}$}  \!\mathrm{d}x
	\\
	&\!\!\!\!\!\!\!\!+\! \int_0^\infty \!\hspace{-0.2cm} f_{R_\mathrm{x}}^\mathrm{N}(x) \sum_{i = 1}^\infty \!\left[\prxy^{\xi}(\r_{i-1})\!-\!\prxy^{\xi}(\r_{i})\right] \!\underbrace{\Psi_\mathrm{xy}^\xi\left(\mathrm{s},\r_{i}\right)}_\text{at $P_\x = P_\mathrm{x}^\mathrm{N}$}  \!\mathrm{d}x.
	\end{aligned}\label{Ixyxi}
	\end{equation}
	
	In \eqref{Ixyxi}, $\prxy^{\xi}(\r_0) \triangleq 0$, and
	\begin{equation}
	\begin{aligned} \label{psixyxi_def}
	\Psi_\mathrm{xy}^\xi(\mathrm{s},\r) &\triangleq \frac{\r^2+\mathrm{h_{xy}^2}}{2}\left[1-\left(\frac{\m}{\m+\mu(\s,\r)}\right)^{\m}\right] \\
	&\!\!\!\!\!\!\!- \mathcal{K}(s,\r)
	\,{_2}F_1\left(1+\m,1-\beta;2-\beta;-\frac{\mu(\s,\r)}{\m}\right),
	\end{aligned}
	\end{equation}
	where ${_2}F_1(\cdot)$ is the Gauss hypergeometric function, $\m = \mathrm{m_{xy}^\xi}$, $\beta = \frac{2}{\alpha_\mathrm{xy}^\xi}$, $\mathrm{s} = \mathrm{s_y}\frac{\mathrm{g_{xy}}}{\hat{\tau}_{\mathrm{xy}}^\xi}$, and
	\begin{align} \label{eeq5}
	\mu(\s,\r) \triangleq \frac{\mathrm{s} P_\mathrm{x}}{(\r^2\!+\!\mathrm{h_{xy}^2})^{1/\beta}},~~ 	\mathcal{K}(s,\r) \triangleq \frac{\mathrm{s} P_\mathrm{x}}{2(1\!-\!\beta)\,(\r^2\!+\!\mathrm{h_{xy}^2})^{1/\beta-1}}.
	\end{align}
\end{Theorem}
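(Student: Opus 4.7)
The plan is to derive the expression by conditioning, applying the Nakagami-m CCDF trick, and then using the PGFL of independent PPPs to compute the Laplace transform of the interference. I would start by conditioning on the typical U2U link length $R_\u=\ru$ and on its LoS/NLoS state $\nu$, so that $\mathcal{C}_\u(\t) = \sum_\nu \int_0^{\rMuu} f_{R_\u}^\nu(\ru)\, \mathbb{P}\{\sinr_\u > \t \mid \Ru=\ru, \nu\}\, d\ru$, which gives the outer integral in \eqref{eq:U2ULinkCoverage}. Here $f_{R_\u}^\nu(\ru) = f_{R_\u}(\ru)\,\mathsf{p}_{\u\u}^\nu(\ru)$, i.e., the joint density that the desired pair distance is $\ru$ and the link is of type $\nu$.

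Next I would derive the conditional coverage probability \eqref{eq:U2ULinkCondCoverage}. Writing the event as $\psixy^\nu > \sx_\u (\mathrm{N_0} + \iu)/\mxy^\nu$ with $\sx_\u$ as defined, the Nakagami-m CCDF of \eqref{FadingCDF} converts the probability into the finite sum $\sum_{i=0}^{\mxy^\nu-1} \mathbb{E}_{\iu}\!\left[\tfrac{(\sx_\u (\mathrm{N_0}+\iu))^i}{i!}\, e^{-\sx_\u (\mathrm{N_0}+\iu)}\right]$. A binomial expansion of $(\mathrm{N_0}+\iu)^i$, the identity $\mathbb{E}[\iu^j e^{-\sx_\u \iu}] = (-1)^j \D_{\sx_\u}^j \lapiu^\nu(\sx_\u)$, and a swap of summation indices regroup the terms into $\sum_{i=0}^{\mxy^\nu-1}(-1)^i \q_{\u,i}^\nu \D_{\sx_\u}^i[\lapiu^\nu(\sx_\u)]$ with $\q_{\u,i}^\nu$ exactly as in the statement.

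The central part is then computing $\lapiu^\nu(\sx_\u) = \mathbb{E}[e^{-\sx_\u \iu}]$. Because the interference at the typical UAV receiver is generated by two independent PPPs---interfering UAVs on the same PRBs (thinned density $\lamuhat = \eta_\u \lamu$) and the GUE uplink transmitters (density $\lamb$)---the Laplace transform factorizes, so the exponent $\Lambda(\sx_\u)$ splits additively into a UAV contribution and a GUE contribution. On each PPP I would apply the PGFL, converting the expectation into $\exp\!\left(-2\pi \lambda \int_0^\infty \left(1 - \mathbb{E}_{\psi, P_\x}[e^{-\sx_\u \psi P_\x \gxy/(\hat\tau_{\x\y}^\xi (r^2+\hxy^2)^{\alpha/2})}]\right) r\, dr\right)$ and further conditioning on the LoS/NLoS state $\xi$ of the interfering link. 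The Nakagami-m moment generating function collapses $\mathbb{E}_{\psi}[e^{-a \psi}] = (\m/(\m+a))^{\m}$, and the radial integral over $r$ from $r$ onward (or its complement) yields precisely the hypergeometric term ${_2}F_1(1+\m,1-\beta;2-\beta;-\mu/\m)$ appearing in $\Psi_{\x\y}^\xi$ in \eqref{psixyxi_def}.

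The main obstacle is handling the \emph{fractional power control} \eqref{eqn:power_control} consistently: the transmit power $P_\x$ of an interferer depends on the large-scale fading to its own intended receiver, not to the typical receiver at the origin. This forces an additional expectation over the interferer's serving distance, which I would decouple by conditioning on the LoS state of its desired link, giving the outer integrals against $f_{R_\x}^\L$ and $f_{R_\x}^\N$ in \eqref{Ixyxi} with the matching $P_\x = P_\x^\L$ or $P_\x^\N$ under the braces. The inner telescoping series $\sum_i [\mathsf{p}_{\x\y}^\xi(r_{i-1}) - \mathsf{p}_{\x\y}^\xi(r_i)]\Psi_{\x\y}^\xi(\sx,r_i)$ encodes the LoS state of the interfering link at the typical receiver, replacing the smooth distance-dependent LoS probability by a piecewise representation that is compatible with evaluating $\Psi_{\x\y}^\xi$ as an antiderivative between consecutive thresholds $r_{i-1}$ and $r_i$. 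Assembling the LoS/NLoS cases for both the UAV and GUE interferers produces $\Lambda(\sx_\u)$ and thus $\lapiu^\nu$, completing the proof.
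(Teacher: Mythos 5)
Your proposal is correct and follows essentially the same route as the paper's proof: conditioning on the link distance and LoS/NLoS state, the Nakagami-m CCDF expansion with binomial regrouping into Laplacian derivatives, PGFL factorization over the independent UAV and GUE point processes, handling fractional power control by averaging over the interferer's own serving-link distance and condition, and exploiting the piecewise-constant ITU LoS probability with a summation-by-parts rearrangement to arrive at the $\Psi_\mathrm{xy}^\xi$ terms. The only cosmetic difference is that you take the Nakagami fading expectation before the radial integral, whereas the paper integrates over $r$ first (via incomplete gamma functions) and then averages over the fading to obtain the ${_2}F_1$ form; by Fubini both orders yield the same $\Psi_\mathrm{xy}^\xi$.
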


\begin{proof}
	See Appendix \ref{U2Ucov_proof}.
\end{proof}



\begin{Remark} \label{DerGuideline}
In order to compute the coverage probability in \eqref{eq:U2ULinkCondCoverage}, one needs to calculate the derivatives of $\lapiu^\nu(\mathrm{s_u})$. Such derivation can be performed as explained in Appendix~\ref{Laplacian_derivatives}.
\end{Remark}

\subsubsection*{Overlay in-band U2U}

The overlay U2U coverage probability can be obtained by setting $\lamb = 0$ and $\lamuhat = \lamu$ in Theorem~\ref{U2Ucov_theorem}. In this case, UAVs only perceive interference generated by other UAVs, and hence one can write for the Laplacian of the aggregate interference in \eqref{eq:U2ULinkCondCoverage}
\begin{equation}
	\lapiu^\nu(\mathrm{s_u}) = e^{-2 \pi \lamu \sum_{\xi \in \{\mathrm{L},\mathrm{N}\}}\!\!\mathcal{I}_\mathrm{uu}^\mathrm{\xi}(\mathrm{s_u})  }.
\end{equation}

\subsection{Exact GUE UL Coverage Probability}

\subsubsection*{Underlay in-band U2U}

We now obtain the GUE UL coverage probability in the underlay, i.e., the CCDF of the UL SINR experienced by a GUE in the presence of U2U communications sharing the same spectrum.

\begin{Theorem} \label{C2Bcov_theorem}
	
	The underlay GUE UL coverage probability is given by	
	\begin{align}
	\pcovc(\t) &= \!\!\!\! \sum_{\nu \in \{\mathrm{L},\mathrm{N}\}}\int_0^\infty f_{R_\c}^\nu(\rc)\, \mathcal{C}_{\c|\Rc}^\nu(\rc)  \,\,\mathrm{d}\rc,
	\end{align}
	where $\mathcal{C}_{\c|\Rc}^\nu(\rc)$ is the GUE coverage probability given the distance to the typical BS, i.e., $\Rc = \rc$ and its condition $\nu$, i.e., LoS or NLoS, which can be expressed as
	\begin{align} \label{eq:GUEULCoverage}
		\mathcal{C}_{\c|\Rc}^\nu(\rc) &= \sum_{i=0}^{\mathrm{m_{gb}^\nu}-1} (-1)^i \q_{\c,i}^\nu \cdot \D^i_{\s_\c} \left[ \lapic^\nu(\s_\c)\right],
	\end{align}
	and where
	\begin{align}
	\q_{\c,i}^\nu &\triangleq \frac{e^{-\s_\c\mathrm{N_0}}}{i!}\sum_{j=i}^{\mathrm{m_{gb}^\nu}-1} \frac{\mathrm{N_0}^{j-i}\s_\c^j}{(j-i)!}, \\	\s_\c &\triangleq \frac{\mathrm{m_{gb}^\nu} \t}{\pc^\nu(\rc) \zcb^\nu(\rc)^{-1}}.
	\end{align}
	 In \eqref{eq:GUEULCoverage}, the interference is characterized by its Laplacian, which is obtained as
	\begin{align}
	\lapic &= e^{ -2 \pi \hat{\lambda}_\u \sum_{\xi \in \{\mathrm{L},\mathrm{N}\}}\mathcal{I}_\mathrm{ug}^\mathrm{\xi} } \cdot e^{ -(2 \pi \lamb)^2 \sum_{\xi \in \{\mathrm{L},\mathrm{N}\}}\mathcal{I}_\mathrm{gg}^\mathrm{\xi} },
	\end{align}
	where $\mathcal{I}_\mathrm{ug}^\mathrm{\xi}$ is
	\begin{align} \nonumber
	\mathcal{I}_\mathrm{ug}^\xi \!&= \!\!\! \int_0^\infty \!\hspace{-0.2cm} f_{\Ru}^\L(x) \!\sum_{i = 1}^\infty \pub^{\xi}(\r_i) \Big(\!\underbrace{\Psi_\mathrm{ub}^\xi\left(\mathrm{s},\r_{i+1}\right) \!-\! \Psi_\mathrm{ub}^\xi\left(\mathrm{s},\r_{i}\right)}_\text{at $P_\mathrm{u} = P_\mathrm{u}^\mathrm{L}$}\!\Big)  \mathrm{d}x
	\\ \label{Iubxi}
	&\!\!+\!\! \int_0^\infty \!\hspace{-0.2cm} f_{\Ru}^\mathrm{N}(x)\!\sum_{i = 1}^\infty \pub^{\xi}(\r_i) \Big(\!\underbrace{\Psi_\mathrm{ub}^\xi\left(\mathrm{s},\r_{i+1}\right) \!-\! \Psi_\mathrm{ub}^\xi\left(\mathrm{s},\r_{i}\right)}_\text{at $P_\mathrm{u} = P_\mathrm{u}^\mathrm{N}$}\!\Big)  \,\mathrm{d}x,
	\end{align}
	with $\mathrm{s} = \mathrm{s_{g}}\frac{\mathrm{g_{ub}(\r_i)}}{\hat{\tau}_{\mathrm{ub}}^\xi}$, whereas
	\begin{equation}
	\begin{aligned} \label{Iccxi}
	\mathcal{I}_\mathrm{gg}^\xi &\!\!=\!\! \int_0^\infty \!\prcb^\L(x) x e^{-\lamb \pi x^2} \\
	&\!\!\!\!\!\!\!\times \sum_{i = j(x)}^\mathrm{\infty} \!\!\prcb^{\xi}(\r_i)\, \!\Big(\!\underbrace{\Psi_\mathrm{gb}^\xi\left(\mathrm{s},\r_{i+1}\right) \!-\! \Psi_\mathrm{gb}^\xi\left(\mathrm{s},\r_{i}\right)}_{\text{at $P_{\g} = P_{\g}^\mathrm{L}$}} \!\Big) \mathrm{d}x \\
	&\!\!\!\!\!\!\!+ \int_0^\infty \prcb^\N(x) x e^{-\lamb \pi x^2} \\
	&\!\!\!\!\!\!\!\times \sum_{i = j(x)}^\mathrm{\infty} \prcb^{\xi}(\r_i)\, \Big(\underbrace{\Psi_\mathrm{gb}^\xi\left(\mathrm{s},\r_{i+1}\right) - \Psi_\mathrm{gb}^\xi\left(\mathrm{s},\r_{i}\right)}_{\text{at $P_\mathrm{g} = P_\mathrm{g}^\mathrm{N}$}} \Big) \mathrm{d}x,
	\end{aligned}
	\end{equation}
	with $\mathrm{s} = \mathrm{s_{g}}\frac{g_\mathrm{{gb}(\r_i)}}{\hat{\tau}_{\mathrm{gb}}^\xi}$. In \eqref{Iubxi} and \eqref{Iccxi}, $\Psi_\mathrm{ub}^\xi$ and $\Psi_\mathrm{gb}^\xi$ are obtained from \eqref{psixyxi_def}. In \eqref{Iccxi}, $j(x)$ is the index such that $x \in [r_{j(x)},r_{j(x)+1}]$ holds and we replace $r_{j(x)}$ with $x$ in the equation. 
\end{Theorem}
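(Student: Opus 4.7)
The plan is to mimic the roadmap used for Theorem~\ref{U2Ucov_theorem}, adapting it to three structural differences: (i) the typical receiver is now a BS with directional antenna gain, (ii) the typical desired transmitter is a GUE whose 2-D distance to its serving BS is Rayleigh distributed with scale $\sigma_\c = 1/\sqrt{2\pi\lamg}$, and (iii) the aggregate interference $\ic$ at the BS splits into two \emph{independent} contributions, namely UAV transmitters (a homogeneous PPP of intensity $\lamuhat$) and other GUEs (the non-homogeneous PPP of distance-dependent intensity $\lamci(r)=\lamb(1-e^{-\lamb\pi r^2})$).

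First I would decondition on the LoS/NLoS state $\nu$ of the serving link and on the 2-D serving distance $\Rc=\rc$, writing
\begin{equation*}
\pcovc(\t)=\sum_{\nu\in\{\L,\N\}}\int_0^\infty f_{R_\c}^\nu(\rc)\,\mathcal{C}_{\c|\Rc}^\nu(\rc)\,\mathrm{d}\rc.
\end{equation*}
For the conditional probability $\mathcal{C}_{\c|\Rc}^\nu(\rc)=\mathbb{P}[\scb>\s_\c(\ic+\mathrm{N_0})/\mathrm{m_{gb}^\nu}]$, I would invoke the integer-parameter Nakagami-$m$ CCDF of (\ref{FadingCDF}) to convert the event into a finite sum of polynomial-times-exponential terms in $\ic$, expand $(\ic+\mathrm{N_0})^j$ by the binomial theorem, reorder the double sum to separate the $\ic^i$ factor, and identify $\mathbb{E}[\ic^i e^{-\s_\c \ic}]=(-1)^i\D^i_{\s_\c}[\lapic(\s_\c)]$. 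Collecting the noise-dependent combinatorial factor produces exactly $\q_{\c,i}^\nu$ and the claimed representation~(\ref{eq:GUEULCoverage}).

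The next step is the factorization $\lapic=\lapiuc\cdot\lapicc$, which is legitimate because the UAV and GUE interferer processes are independent. For each factor I would apply the PGFL of the corresponding PPP and average the Nakagami-$m$ Laplace transform $\mathbb{E}[\exp(-sP\psi\zeta^{-1})]=(m/(m+sP\zeta^{-1}))^m$ over: (a) the 2-D location of the interferer, (b) the LoS state of the interferer-to-typical-BS link, handled through the piecewise-constant Abel-type expansion $\prxy^\xi(r_{i-1})-\prxy^\xi(r_i)$ on the annuli $[r_{i-1},r_i]$ that was already exploited in Theorem~\ref{U2Ucov_theorem}, and (c) the interferer's distance to \emph{its own} receiver, which determines its fractional-power-control transmit power via (\ref{eqn:power_control}). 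Step (c) is precisely where the two factors differ: for $\lapiuc$ the inner distance density is $f_{\Ru}^\nu(x)$, producing $\mathcal{I}_\mathrm{ug}^\xi$ as in~(\ref{Iubxi}); for $\lapicc$ the inner distance is that between an interfering GUE and its own serving BS, Rayleigh distributed with density $x\,e^{-\lamb\pi x^2}$, which is coupled to the interferer-to-typical-BS distance by the nearest-cell association rule. The per-annulus radial integrals reduce to the $\Psi_\mathrm{xy}^\xi$ of (\ref{psixyxi_def}) via the substitution $u=r^2+\mathrm{h_{xy}^2}$ and the hypergeometric primitive of $(1+cu^{-1/\beta})^{-m}$ already computed for Theorem~\ref{U2Ucov_theorem}.

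The main obstacle I expect is the careful bookkeeping of the GUE-interferer term $\lapicc$. I would need to justify two points: first, that the effective intensity $\lamci(r)=\lamb(1-e^{-\lamb\pi r^2})$ correctly encodes the constraint that an interferer at 2-D distance $r$ from the typical BS has its own serving BS at some distance strictly less than $r$ from it; decomposing $1-e^{-\lamb\pi r^2}=\int_0^r 2\pi\lamb s e^{-\lamb\pi s^2}\mathrm{d}s$ and swapping the order of integration is what produces the outer $(2\pi\lamb)^2$ prefactor multiplying $\mathcal{I}_\mathrm{gg}^\xi$. Second, I would need to justify the coupling between the annulus index $i$ on the interferer-to-typical-BS distance and the auxiliary integration variable $x$ on the interferer-to-own-BS distance: only annuli with $r_i\ge x$ contribute, and the lowest active annulus must be truncated at $x$, which is exactly what the starting index $j(x)$ (with $x\in[r_{j(x)},r_{j(x)+1}]$) and the replacement of $r_{j(x)}$ by $x$ in~(\ref{Iccxi}) encode. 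Once these bookkeeping points are in place, the remaining manipulations are routine Nakagami/hypergeometric identities, and the derivatives appearing in~(\ref{eq:GUEULCoverage}) are obtained as outlined in Remark~\ref{DerGuideline}.
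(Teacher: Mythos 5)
Your proposal is correct and follows essentially the same route as the paper's proof: deconditioning on the serving distance and LoS state, converting the Nakagami CCDF into Laplacian derivatives with the $\q_{\c,i}^\nu$ coefficients, factorizing $\lapic$ by independence, reusing the Theorem~\ref{U2Ucov_theorem} machinery for the UAV term, and handling the GUE term via $1-e^{-\lamb\pi r^2}=\int_0^r 2\pi\lamb x e^{-\lamb\pi x^2}\mathrm{d}x$ plus an interchange of integration order, which is exactly how the paper obtains the $(2\pi\lamb)^2$ prefactor and the $j(x)$-truncated annulus sum. The only detail left implicit is the paper's remark that the BS antenna gain is treated as constant within each annulus, but this does not affect the validity of your argument.
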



\begin{proof}
	See Appendix \ref{C2Bcov_proof}.
\end{proof}

\subsubsection*{Overlay in-band U2U}
The GUE coverage probability in the overlay is obtained by replacing $\hat{\lambda}_\u = 0$ in Theorem \ref{C2Bcov_theorem}. 
\section{Approximated Performance Analysis}
\label{sec:analysis_approx}


While exact, the expressions obtained in Section~\ref{sec:analysis_exact} for the coverage probability may require a considerable effort to be numerically evaluated, particularly for what concerns computing the derivatives of the Laplacian (see Appendix~\ref{Laplacian_derivatives}). In this section, we provide simpler, tight approximations based on practical
assumptions.





\subsection{Preliminaries}

In order to obtain more compact analytical expressions, we employ the following approximations whose accuracy will be validated in Section~\ref{sec:numerical}. 

\begin{Approximation} \label{FadingCDFapp}
	We approximate the CDF of the Nakagami-m small-scale fading power $\sixy$ in \eqref{FadingCDF} as
	\begin{equation} \label{eq:CDFapp}
	F_{\sixy}(\omega) \approx \left( 1- e^{-\mathrm{b_{xy}}\,\omega}\right)^{\mxy},
	\end{equation}
	where $\mathrm{b_{xy}}$ is a function of $\mxy$ provided in Table~\ref{table:bxy}.
	
	\begin{table}
		\centering
		\caption{Values of $\mathrm{b_{xy}}$ as a Function of $\mxy$.}
		\label{table:bxy}
		\def\arraystretch{1.2}
		\begin{tabulary}{\columnwidth}{|c|c|c|c|c|c|}
			\hline\hline
			$\mxy$ & 1 & 2 & 3 & 4 & 5  \\ \hline
			$\mathrm{b_{xy}}$ & 1 & 1.487 & 1.81 & 2.052 & 2.246 \\ \hline\hline\hline
			$\mxy$ & 6 & 7 & 8 & 9 & 10  \\ \hline
			$\mathrm{b_{xy}}$ & 2.408 & 2.546 & 2.668 & 2.775 & 2.872 \\ \hline\hline
		\end{tabulary}
	\end{table}
	
\end{Approximation}

Approximation~\ref{FadingCDFapp} is inspired by \cite{bai2015coverage} and allows to derive closed-form expressions for the Laplacian of the interference, and in turn for the coverage probability. The value of $\mathrm{b_{xy}}$ is obtained through curve fitting.


\begin{Approximation} \label{NLoSapp}
We neglect the interference caused by NLoS UAV-to-UAV, GUE-to-UAV, and UAV-to-BS links.
\end{Approximation}

Approximation~\ref{NLoSapp} holds due to a high probability of having LoS links dominating the interference \cite{azari2019cellular,GerGarGal2018,3GPP36777}.

\begin{Approximation} \label{MeanPowerapp}
We approximate the UAVs transmit power, which is a random variable, with its mean value.\footnote{The distribution of the UAV transmit power in turn depends on the probability of LoS between any pair of nodes. In Section~\ref{sec:numerical}, Proposition~\ref{proposition:meanUAVtxPower}, we calculate the mean UAV transmit power for the case where the probability of LoS follows the well known ITU model \cite{ITU2012}.}
\end{Approximation}	

Approximation~\ref{MeanPowerapp} is motivated by the fact that U2U links tend to undergo LoS conditions, and thus a lower path loss exponent \cite{3GPP36777}. This implies a lower variation of the UAV transmit power with respect to its distance from the receiver. This approximation removes one integral in the computation of the coverage probability. 


\subsection{Approximated U2U Coverage Probability} \label{sec:U2U_performance}

\subsubsection*{Underlay in-band U2U} We now make use of the aforementioned approximations to obtain a more compact form for the U2U coverage probability in the underlay.

\begin{Corollary} \label{corollary:U2Ucoverage}
	Under Approximations~1-3, the underlay U2U coverage probability is given by
	\begin{align} 
	\pcovuav(\t) = \int_0^\rMuu f_{\Ru}^\L(\ru) \mathcal{C}_{\mathrm{u}|\Ru}^\L(\ru) \mathrm{d}\ru,
	\end{align}
	where
	\begin{equation}
	\mathcal{C}_{\mathrm{u}|\Ru}^\L(\ru) = \sum_{i=1}^{\mathrm{m_{uu}^\L}} \binom{\mathrm{m_{uu}^\L}}{i}(-1)^{i+1} e^{-z_{\u,i}^\L \mathrm{N_0}} \cdot \lapiu^\L(z_{\u,i}^\L),
	\end{equation}
	\begin{align}
	\lapiu^\L(z_{\u,i}^\L) = \underbrace{e^{ -2 \pi \hat{\lambda}_\u \mathcal{I}_\mathrm{uu}^\L}}_{\text{due to LoS UAVs}} \cdot \underbrace{e^{ -2 \pi \lamb \mathcal{I}_\mathrm{gu}^\L }}_{\text{due to LoS GUEs}},
	\end{align}
	and
	\begin{align}
	\mathcal{I}_\mathrm{uu}^\L &= \sum_{j = 1}^\infty \left[\pruu^{\L}(\r_{j-1})-\pruu^{\L}(\r_{j})\right] \underbrace{\Psi_\mathrm{uu}^\L\left(\mathrm{s},\r_{j}\right)}_\text{at $\pu = \bar{P}_\mathrm{u}$},
	\end{align}
	with $\mathcal{I}_\mathrm{gu}^\L$ and $\Psi_\mathrm{uu}^\L$ defined in Theorem \ref{U2Ucov_theorem} and
	\begin{equation}
	\mathrm{s} = z_{\u,i}^\L \frac{\mathrm{g_{uu}}}{\hat{\tau}_{\mathrm{uu}}^\L};
	~~z_{\u,i}^\L = \frac{i b_{\u\u}^\L \t}{\pu^\L \zuu^\L(\ru)^{-1}}.
	\end{equation} 
\end{Corollary}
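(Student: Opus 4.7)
The plan is to specialize Theorem~\ref{U2Ucov_theorem} by successively applying Approximations~\ref{FadingCDFapp}, \ref{NLoSapp}, and \ref{MeanPowerapp}, and to check that each one produces the corresponding structural simplification appearing in the statement. First I would revisit the starting point: by Theorem~\ref{U2Ucov_theorem}, the U2U coverage probability is a mixture over LoS/NLoS conditioning of the typical U2U link, weighted by $f_{R_\u}^\nu(\ru)$, and inside each term the conditional coverage involves derivatives of $\lapiu^\nu$. The goal is to show that all three simplifying features in the corollary---(i) disappearance of the NLoS branch of the outer mixture and of the NLoS interference branch in the Laplacian, (ii) replacement of the $\sum (-1)^i \q_{\u,i}^\nu \D^i_{\s_\u}\lapiu^\nu(\mathrm{s_\u})$ derivative sum by the binomial sum $\sum \binom{\m_{\u\u}^\L}{i}(-1)^{i+1} e^{-z_{\u,i}^\L \mathrm{N_0}} \lapiu^\L(z_{\u,i}^\L)$, and (iii) removal of one integral in $\mathcal{I}_\mathrm{uu}^\L$---follow, one per approximation.

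The main technical step is (ii). I would start from the exact conditional coverage, which is
$\mathcal{C}_{\u|R_\u}^\nu(\ru)=\mathbb{P}[\sixy_{\u\u}^\nu > \mathrm{T}(I_\u+\mathrm{N_0})/(\pu^\nu(\ru)\zuu^\nu(\ru)^{-1})]$,
and replace $F_{\sixy_{\u\u}^\nu}$ by the approximation $1-(1-e^{-\mathrm{b_{uu}^\nu}\omega})^{\m_{\u\u}^\nu}$ from Approximation~\ref{FadingCDFapp}. Expanding the $\m_{\u\u}^\nu$-th power binomially and then taking expectation over the interference yields
\begin{equation*}
\mathcal{C}_{\u|R_\u}^\nu(\ru)\approx\sum_{i=1}^{\m_{\u\u}^\nu}\binom{\m_{\u\u}^\nu}{i}(-1)^{i+1} e^{-z_{\u,i}^\nu\mathrm{N_0}}\,\lapiu^\nu(z_{\u,i}^\nu),
\end{equation*}
with $z_{\u,i}^\nu=i\mathrm{b_{uu}^\nu}\mathrm{T}/(\pu^\nu(\ru)\zuu^\nu(\ru)^{-1})$. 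This is the shape of the expression appearing in the corollary; the point is that the approximated fading CDF turns the awkward derivative sum of Theorem~\ref{U2Ucov_theorem}---which arose from the exact Nakagami-$m$ CCDF expansion $\sum_{j=0}^{\m-1}(\m\omega)^j/j!\,e^{-\m\omega}$---into a direct evaluation of $\lapiu^\nu$ at shifted arguments.

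Next I would invoke Approximation~\ref{NLoSapp} to drop all NLoS interference contributions in $\lapiu^\nu$, which kills the $\xi=\mathrm{N}$ summand in $\Lambda(\mathrm{s_u})$; what remains is $\lapiu^\L(z_{\u,i}^\L)=\exp(-2\pi\lamuhat\,\mathcal{I}_\mathrm{uu}^\L-2\pi\lamb\,\mathcal{I}_\mathrm{gu}^\L)$, matching the factorization in the statement. By the same token, one restricts attention to the LoS branch of the outer mixture, which is the regime where U2U links concentrate in practice. Finally I would apply Approximation~\ref{MeanPowerapp}: in the integrand of $\mathcal{I}_\mathrm{uu}^\L$ as defined in \eqref{Ixyxi}, the dependence on the interferer's own distance $x$ enters only through $P_\x$ inside $\Psi_\mathrm{uu}^\xi$; replacing $P_\u$ by its mean $\bar{P}_\u$ eliminates the outer integral in $x$ (the densities $f_{R_\u}^\L,f_{R_\u}^\N$ integrate to one), yielding exactly the single-sum expression for $\mathcal{I}_\mathrm{uu}^\L$ in the corollary. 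The most delicate part is step (ii), since one must verify that the binomial identity applied to the approximated CDF matches cleanly onto the Laplacian evaluated at the scaled arguments $z_{\u,i}^\L$; the rest is bookkeeping across the three approximations.
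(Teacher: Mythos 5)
Your proposal is correct and follows essentially the same route as the paper's own proof: Approximation~1 replaces the exact Nakagami CCDF by $(1-e^{-\mathrm{b}\omega})^{\mathrm{m}}$ and a binomial expansion turns the derivative sum of Theorem~\ref{U2Ucov_theorem} into direct evaluations of $\lapiu^\L$ at $z_{\u,i}^\L$, Approximation~2 removes both the NLoS branch of the outer mixture and the NLoS interference terms in the Laplacian, and Approximation~3 replaces $\pu$ by $\bar{P}_\u$ so that the integral over the interferer's link distance collapses (the only cosmetic slip is that $f_{\Ru}^\L$ and $f_{\Ru}^\N$ integrate to one jointly, not individually, which is what makes the collapse to a single sum work).
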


\begin{proof}
See Appendix \ref{proof:U2Ucoverage_approx}.
\end{proof}

\subsubsection*{Overlay in-band U2U}
Under Approximations~1-3, the overlay U2U coverage probability can be obtained from Corollary~\ref{corollary:U2Ucoverage} by substituting $\lamb=0$, $\lamuhat=\lamu$, and
	\begin{equation}
	\lapiu^\L(z_{\u,i}^\L) = e^{-2 \pi \lamu \mathcal{I}_\mathrm{uu}^\mathrm{\L}(z_{\u,i}^\L)  },
	\end{equation}
since the aggregate interference only includes the UAV-generated one.



\subsection{Approximated GUE UL Coverage Probability} \label{sec:GUE_performance}

\subsubsection*{Underlay in-band U2U}
Similarly, we now make use of the proposed approximations to obtain a more compact form for the GUE UL coverage probability in the underlay.

\begin{Corollary} \label{proposition:GUEcov}
Under Approximations~1-3, the underlay GUE UL coverage probability is given by
	\begin{align} 
	\pcovc(\t) &= \!\!\!\! \sum_{\nu \in \{\mathrm{L},\mathrm{N}\}}\int_0^\infty f_{R_\c}^\nu(\rc)\, \mathcal{C}_{\c|\Rc}^\nu(\rc)  \,\,\mathrm{d}\rc,
	\end{align}
	where
	\begin{align} 
	\mathcal{C}_{\c|\Rc}^\nu(\rc) &= \sum_{i=1}^{\mathrm{m_{gb}^\nu}} \binom{\mathrm{m_{gb}^\nu}}{i}(-1)^{i+1} e^{-z_{\g,i}^\nu \mathrm{N_0}} \cdot \lapic^\nu(z_{\g,i}^\nu),
	\end{align}
	and  
	\begin{align} \label{eqn:Lig}
	\lapic^\nu(z_{\g,i}^\nu) &= \underbrace{e^{ -2 \pi \hat{\lambda}_\u \mathcal{I}_\mathrm{ug}^\L }}_{\text{due to LoS UAVs}} \cdot \underbrace{e^{ -2 \pi \lamb \sum_{\xi \in \{\mathrm{L},\mathrm{N}\}}\mathcal{I}_\mathrm{gg}^\mathrm{\xi} }}_{\text{due to GUEs}}.
	\end{align}
	In (\ref{eqn:Lig}), $\mathcal{I}_\mathrm{ug}^\L$ is given by
	\begin{align}
	\mathcal{I}_\mathrm{ug}^\L &= \sum_{j = 1}^\infty \pub^{\L}(\r_j) \Big(\underbrace{\Psi_\mathrm{ub}^\L\left(\mathrm{s},\r_{j+1}\right) - \Psi_\mathrm{ub}^\L\left(\mathrm{s},\r_{j}\right)}_\text{at $\pu = \bar{P}_\mathrm{u}$}\Big),
	\end{align}
	whereas $\mathcal{I}_\mathrm{gg}^\mathrm{\xi}$ and $\Psi_\mathrm{ub}^\L$ are provided in Theorem \ref{C2Bcov_theorem} where we replace $\mathrm{s_g}$ with
	\begin{equation}
	z_{\g,i}^\nu = \frac{i b_{\c\mathrm{b}}^\nu \t}{\pc^\nu \zcb^\nu(\ru)^{-1}}.
	\end{equation}
\end{Corollary}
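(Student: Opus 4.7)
The plan is to start from the exact expression for $\pcovc(\t)$ in Theorem~\ref{C2Bcov_theorem} and apply Approximations~\ref{FadingCDFapp}--\ref{MeanPowerapp} in sequence, mirroring the derivation used for Corollary~\ref{corollary:U2Ucoverage}. The outer structure---the integration over $f_{R_\c}^\nu(\rc)$ and the sum over link conditions $\nu\in\{\mathrm{L},\mathrm{N}\}$---is untouched, since none of the approximations concern the serving GUE--BS link geometry.

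First, I would apply Approximation~\ref{FadingCDFapp} to the small-scale fading $\psi_{\mathrm{gb}}^\nu$ on the typical serving link. Writing coverage as the event $\{\psi_{\mathrm{gb}}^\nu > \s_\c(\mathrm{N_0}+I_\g)/m_{\mathrm{gb}}^\nu\}$ and using the CCDF form $1-(1-e^{-b_{\mathrm{gb}}^\nu \omega})^{m_{\mathrm{gb}}^\nu} = \sum_{i=1}^{m_{\mathrm{gb}}^\nu}\binom{m_{\mathrm{gb}}^\nu}{i}(-1)^{i+1}e^{-i b_{\mathrm{gb}}^\nu \omega}$, I then take the expectation over noise and aggregate interference $I_\g$. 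This exchanges the exact representation in \eqref{eq:GUEULCoverage} (a weighted sum of derivatives of $\lapic^\nu$) for a finite weighted sum of \emph{evaluations} of $\lapic^\nu$ at the points $z_{\g,i}^\nu$, each multiplied by the noise factor $e^{-z_{\g,i}^\nu \mathrm{N_0}}$, which matches the target expression for $\mathcal{C}_{\c|\Rc}^\nu(\rc)$.

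Second, I would apply Approximation~\ref{NLoSapp} inside the Laplacian. In Theorem~\ref{C2Bcov_theorem}, $\lapic$ factorizes into a UAV-generated term and a GUE-generated term. Neglecting NLoS UAV-to-BS contributions collapses the sum over $\xi$ in the UAV factor to the LoS summand only, producing $e^{-2\pi\hat\lambda_\u \mathcal{I}_{\mathrm{ug}}^\L}$ in \eqref{eqn:Lig}. Approximation~\ref{NLoSapp} does \emph{not} apply to ground-to-ground links, so the GUE factor retains both $\xi\in\{\L,\N\}$ components, exactly as in the statement.

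Third, I would apply Approximation~\ref{MeanPowerapp} to $\mathcal{I}_{\mathrm{ug}}^\L$ as given by \eqref{Iubxi}. The two outer integrals over $f_{R_\u}^\L$ and $f_{R_\u}^\N$ enter only because the transmit power $P_\u$ depends on the random U2U distance and its LoS/NLoS condition; replacing $P_\u$ by its mean $\bar P_\u$ makes the summand independent of $x$, so both integrals reduce to the LoS/NLoS probability masses which are already absorbed into $\bar P_\u$. What remains is the single sum $\sum_{j=1}^\infty \pub^\L(\r_j)\big(\Psi_\mathrm{ub}^\L(\mathrm{s},\r_{j+1})-\Psi_\mathrm{ub}^\L(\mathrm{s},\r_j)\big)$ evaluated at $\pu=\bar{P}_\mathrm{u}$. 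The main obstacle, as in the proof of Corollary~\ref{corollary:U2Ucoverage}, is keeping track of \emph{which} terms each approximation acts on: only the small-scale fading on the serving link is reshaped by Approximation~\ref{FadingCDFapp}, only UAV-related interfering links are thinned to LoS by Approximation~\ref{NLoSapp}, and only the UAV transmit power (not the GUE one) is averaged by Approximation~\ref{MeanPowerapp}. Once this bookkeeping is handled, the remaining manipulations are algebraic reorganizations identical to those in Appendix~\ref{proof:U2Ucoverage_approx}.
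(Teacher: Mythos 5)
Your proposal is correct and follows essentially the same route as the paper, which itself omits this proof by noting it is analogous to that of Corollary~\ref{corollary:U2Ucoverage}: Approximation~\ref{FadingCDFapp} on the serving GUE--BS fading yields the finite binomial sum of Laplacian evaluations at $z_{\g,i}^\nu$, Approximation~\ref{NLoSapp} retains only the LoS part of the UAV-to-BS interference while keeping both $\xi$ components of the GUE-generated term, and Approximation~\ref{MeanPowerapp} replaces $P_\u$ by $\bar{P}_\u$ so the integrals over $f_{R_\u}^\L$ and $f_{R_\u}^\N$ in \eqref{Iubxi} integrate to one and collapse to the single sum. Your bookkeeping of which links each approximation touches (in particular that the serving link keeps both LoS and NLoS conditions $\nu$) matches the intended argument.
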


\begin{proof}
Similar	to proof of Corollary~\ref{corollary:U2Ucoverage} and thus omitted.
\end{proof}

\subsubsection*{Overlay in-band U2U}
Under Approximations~1-3, the overlay GUE UL coverage probability can be obtained from Corollary~\ref{proposition:GUEcov} by replacing $\hat{\lambda}_\u = 0$, since the aggregate interference only includes the GUE-generated one.

\section{Numerical Results and Discussion}
\label{sec:numerical}

In this section, we first validate our analysis and then characterize the performance of U2U and GUE UL cellular communications under overlay and underlay spectrum sharing strategies. Specifically, we consider an urban scenario, and we concentrate on evaluating how aerial and ground communications are affected by the UAV altitude, density, power control, link distance, and resource utilization. 
Unless otherwise specified, the system parameters are included in Table~\ref{table:parameters} and follow the 3GPP specifications in \cite{3GPP36777}.

\begin{table}
	\centering
	\caption{System Parameters}
	\label{table:parameters}
	\def\arraystretch{1.2}
	\begin{tabulary}{\columnwidth}{ |p{2.2cm} | p{5.2cm} | }
		\hline
		\textbf{Deployment} 			&  \\ \hline  
		BS distribution		& PPP with $\lamb = 5$~/~Km$^2$, $\hb=25$~m\\ \hline
		GUE distribution 				& One active GUE per cell, $\hc=1.5$~m \\ \hline
		UAV distribution 				& $\lamu\!=\!1$\,/\,Km$^2$, $\bar{R}_\mathrm{u}\!=\!100\,$m, $\hu\!=\!100\,$m \\ \hline\hline 
		\textbf{Channel model} 			&  \\ \hline
		\multirow{2}{*}{Ref. path loss [dB]}		&  $\hat{\tau}_\mathrm{gb}^\L = 28+20\log_{10}(f_c)$ \enspace ($f_c$ in GHz) \\ \cline{2-2}
		& $\hat{\tau}_\mathrm{gb}^\N = 13.54+20\log_{10}(f_c)$ \\ \cline{2-2}
		& $\hat{\tau}_\mathrm{ub}^\L = 28+20\log_{10}(f_c)$ \\ \cline{2-2}
		& $\hat{\tau}_\mathrm{ub}^\N = -17.5+20\log_{10}(40\pi f_c/3)$ \\ \cline{2-2}
		& $\hat{\tau}_\mathrm{gu}^\L = 30.9+20\log_{10}(f_c)$  \\ \cline{2-2}
		&  $\hat{\tau}_\mathrm{gu}^\N = 32.4+20\log_{10}(f_c)$  \\ \cline{2-2}
		& $\hat{\tau}_\mathrm{uu}^\L = 28+20\log_{10}(f_c)$\\ \cline{2-2}
		& $\hat{\tau}_\mathrm{uu}^\N = -17.5+20\log_{10}(40\pi f_c/3)$  \\ \hline
		\multirow{2}{*}{Path loss exponent}		&  $\alcb = 2.2,~~~\ancb = 3.9$ \\ \cline{2-2}
		& $\aubl = 2.2,~~~\aubn = 4.6-0.7\log_{10}(\hu)$ \\ \cline{2-2}
		& $\acul = 2.225-0.05\log_{10}(\hu)$ \\ & $\acun = 4.32-0.76\log_{10}(\hu)$ \\ \cline{2-2}
		& $\auul = 2.2,~~~\auun = 4.6-0.7\log_{10}(\hu)$\\ \hline
		Small-scale fading  & Nakagami-m with $\mxy^\xi = 1$ for NLoS links, $\mxy^\xi = 3$ for LoS GUE links, and $\mxy^\xi = 5$ for LoS UAV links \\ \hline
		Prob. of LoS & ITU model as per \eqref{PrLoS} \\ \hline
		Thermal noise 				& -174 dBm/Hz with 7~dB noise figure \\ \hline \hline 
		\textbf{PHY} 			&  \\ \hline
		\multirow{2}{*}{Spectrum}		& Carrier frequency: 2~GHz \\ \cline{2-2} 
		& Bandwidth: 10 MHz with 50 PRBs \\ \hline 
		BS array configuration		& $8\times 1$ vertical, 1 RF chain, downtilt: $102^{\circ}$, element gain as in \eqref{ElementGain}, spacing: $0.5~\lambda$\\ \hline
		Power control		& Fractional, based on GUE-to-BS (resp. U2U) large-scale fading for GUEs (resp. UAVs), with $\ec = \eu = 0.6$, $\pcr = \pur = -58$~dBm, and $P^{\textrm{max}}_\c = P^{\textrm{max}}_\u = 24$~dBm \cite{BarGalGar2018GC}\\ \hline
		GUE/UAV antenna 		& Omnidirectional with 0~dBi gain \\ \hline 
		\end{tabulary}
\end{table}


\subsection{Preliminaries}


While our analysis holds for any transmit/receive UAV height, in the following we assume all UAVs to be located at the same height $\hu$, to evaluate the impact of such parameter.

We model the U2U link distance $\Ru$ via a truncated Rayleigh distribution with probability density function (PDF)
\begin{align}\label{U2U_distance_pdf}
f_{\Ru}(\ru) = \frac{\ru e^{-\r_\u^2/(2\sigma_\mathrm{u}^2)}}{\sigma_\mathrm{u}^2\left(1-\e^{-r_\mathrm{M}^2/(2\sigma_\mathrm{u}^2)}\right)}\cdot\mathds{1}(\ru<r_\mathrm{M}),
\end{align}
where $\r_\mathrm{M}$ is the maximum U2U link distance, $\mathds{1}(\cdot)$ is the indicator function, and $\sigma_\mathrm{u}$ is the Rayleigh scale parameter, related to the mean distance $\bar{R}_\mathrm{u}$ through $\sigma_\mathrm{u} = \sqrt{\frac{2}{\pi}}\bar{R}_\mathrm{u}$.

As for the probability of LoS between any pair of nodes x and y, we employ the well known ITU model \cite{ITU2012,azari2019cellular}:
\begin{equation}
\prxy^\L(r) \!=\! \!\!\!\!\!\!\! \prod_{j=0}^{\lfloor{\frac{r\sqrt{\a_1\a_2}}{1000}-1\rfloor}} \!\!\left[1\!-\!\exp\!\left(\!-\frac{\left[\!\hx\!-\!\frac{(j+0.5)(\hx-\hy)}{\mathrm{k}+1} \!\right]^2}{2 \a_3^2}\right) \!\right]\!,\!\!
\label{PrLoS}
\end{equation}
where $\{\a_1, \a_2, \a_3\}$ are environment-dependent parameters set to $\{0.3, 500, 20\}$ to model an urban scenario. The probability of NLoS is simply obtained as $\prxy^\N(r) = 1-\prxy^\L(r)$.

Employing (\ref{U2U_distance_pdf}) and (\ref{PrLoS}), the mean UAV transmit power is then obtained as follows.

\begin{Proposition} \label{proposition:meanUAVtxPower}
	The mean UAV transmit power is given by
	\begin{equation} \label{eqn:meanPower}
	\begin{aligned}
	\mathbb{E}[\pu] =  \!\!\! \sum_{\nu \in \{\mathrm{L},\mathrm{N}\}}  &\Bigg[\sum_{i=1}^{j} [C_i^\nu-C_{i+1}^\nu] \,\, \gamma\left(1+\frac{\auu^\nu \eu}{2},y_{i+1}\right)\\
	&+\sum_{i=j+1}^{k+1} [B_i^\nu-B_{i-1}^\nu] \, e^{-y_i} \Bigg],
	\end{aligned}
	\end{equation}
	where
	\begin{align} 
	C_i^\nu &\!=\! \frac{(2\sigma_\u^2)^{{\auu^\nu\eu/2}}\pur\left(\hat{\tau}_{\mathrm{uu}}^\nu/\guu\right)^{\eu}}{1-\e^{-r_\mathrm{M}^2/(2\sigma_\mathrm{u}^2)}} \cdot \pruu^\nu(r_i), \text{ for } i>0,
	\end{align}
	$B_j^\nu = 0$, $B_{k+1}^\nu = 0$, and
	\begin{align}
	B_i^\nu &= \frac{\pumax \, \pruu^\nu(r_i) }{1-\e^{-r_\mathrm{M}^2/(2\sigma_\mathrm{u}^2)}};~i>j,
	\end{align}
	and where $j = \lfloor{\frac{r_m^\nu\sqrt{\a_1\a_2}}{1000}\rfloor}+1$, $k = \lfloor{\frac{r_M\sqrt{\a_1\a_2}}{1000}\rfloor}+2$, $y_i = \frac{r_i^2}{2\sigma_u^2}$, $r_k = r_M$, and $r_{j+1} = r_m^\nu$. The latter is the distance at which the UAV reaches its maximum allowed transmit power, which depends on the link condition (LoS vs. NLoS) and can be obtained from \eqref{eqn:power_control} as follows
	\begin{equation}
	r_m^\nu = \left(\frac{\guu}{\hat{\tau}_{\mathrm{uu}}^\nu}\right)^{1/\auu^\nu} \cdot \left(\frac{\pumax}{\pur}\right)^{1/(\auu^\nu\eu)}.
	\end{equation}
\end{Proposition}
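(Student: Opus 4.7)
The plan is to evaluate $\mathbb{E}[\pu] = \sum_{\nu \in \{\L,\N\}} \int_0^{r_\mathrm{M}} f_{R_\u}(r)\, \pruu^\nu(r)\, \pu^\nu(r)\, \d r$ by combining two partitions of $[0,r_\mathrm{M}]$ and evaluating each resulting piece in closed form.

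First I rewrite the fractional power-control rule \eqref{eqn:power_control} as an explicit piecewise function of the U2U distance. Since the two UAVs share altitude $\hu$, we have $d_\mathrm{uu}=r$, so the $\min$ in \eqref{eqn:power_control} selects $\pur(\hat{\tau}_\mathrm{uu}^\nu/\guu)^{\eu}\, r^{\auu^\nu \eu}$ for $r\le r_m^\nu$ and saturates to $\pumax$ for $r>r_m^\nu$, where $r_m^\nu$ follows by equating the two branches. Then, since the ITU LoS probability \eqref{PrLoS} is piecewise constant with jumps at $t_n = 1000n/\sqrt{\a_1\a_2}$, I merge these natural breakpoints with $r_m^\nu$ and $r_\mathrm{M}$ to obtain an ordered sequence $r_1<r_2<\cdots<r_k$ on which $\pruu^\nu$ takes the constant values $\pruu^\nu(r_i)$, and satisfying $r_{j+1}=r_m^\nu$ and $r_k=r_\mathrm{M}$. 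Counting the $t_n$'s below $r_m^\nu$ and $r_\mathrm{M}$ produces the claimed floor expressions for $j$ and $k$.

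Next I evaluate the two families of integrals arising on the sub-intervals. For $i\le j$ (pre-saturation), the integrand is proportional to $r^{1+\auu^\nu\eu}\, e^{-r^2/(2\sigma_\u^2)}$; the change of variable $y=r^2/(2\sigma_\u^2)$ turns each piece into a difference of lower incomplete gamma functions $\gamma(1+\auu^\nu\eu/2,\cdot)$. For $i>j$ (saturation), the antiderivative of $r\, e^{-r^2/(2\sigma_\u^2)}$ is elementary, yielding $e^{-y_i}$-type terms. Applying Abel summation-by-parts to the two resulting telescoping series regroups them so that the free factor is $\gamma(1+\auu^\nu\eu/2,y_{i+1})$ in the first sum and $e^{-y_i}$ in the second, with coefficients $C_i^\nu$ and $B_i^\nu$ absorbing the constant LoS/NLoS probability of the $i$-th piece, the prefactor of the active branch of the power-control rule, and the normalization $1-e^{-r_\mathrm{M}^2/(2\sigma_\u^2)}$.

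The main obstacle will be the bookkeeping around the transition index $j$, where $r_m^\nu$ is inserted between consecutive ITU breakpoints and the two regimes are spliced together; in particular one must verify that the Abel rearrangement on the saturated part leaves no spurious endpoint contributions, which is exactly captured by the boundary conditions $B_j^\nu=B_{k+1}^\nu=0$ appearing in \eqref{eqn:meanPower}. Once those indices are laid out consistently, summing over $\nu\in\{\L,\N\}$ produces the stated expression.
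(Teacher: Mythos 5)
Your proposal is correct and follows essentially the same route as the paper's proof: splitting the expectation at the saturation distance $r_m^\nu$, exploiting the piecewise-constant ITU LoS probability to reduce each regime to sub-interval integrals, using the substitution $y=r^2/(2\sigma_\u^2)$ to obtain lower incomplete gamma functions (and an elementary exponential antiderivative on the saturated part), and regrouping the telescoping sums via summation by parts with the boundary conventions $C_{j+1}^\nu=0$, $B_j^\nu=B_{k+1}^\nu=0$. No gaps beyond routine bookkeeping, which you have identified accurately.
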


\begin{proof}
	See Appendix \ref{proof:meanUAVtxPower}. 
\end{proof}



\subsection{Analysis Validation and Impact of UAV Height}


Fig.~\ref{Validation} shows the coverage probability for GUE UL and U2U links in the underlay, with $\eta_\u=1$, obtained in three different ways: (i) with our approximated analysis in Section~\ref{sec:analysis_approx}, (ii) through our exact analysis in Section~\ref{sec:analysis_exact}, and (iii) via simulations. The three sets of curves exhibit a close match, thus validating our analysis for the underlay and---as a special case---for the overlay too.

\begin{figure}[t!]
	\centering
	\includegraphics[width=\figwidth]{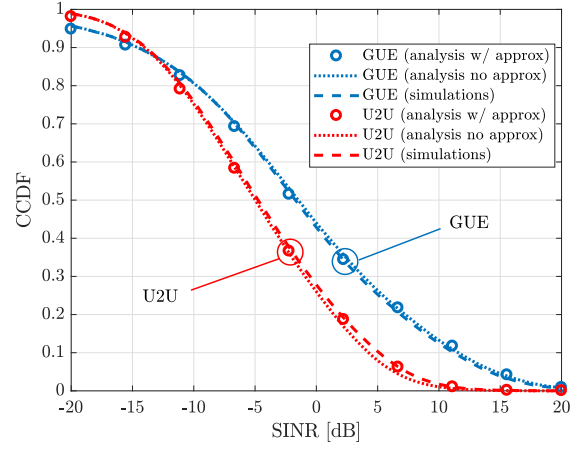}
	\caption{Underlay coverage probability obtained via approximated analysis (solid), exact analysis (dotted), and simulations (dashed).}
	\label{Validation}
\end{figure}


Fig.~\ref{Pcov_T_hu} shows the CCDF of the SINR per PRB in the underlay, with $\eta_\u=1$, experienced by: (i) U2U links, (ii) the UL of GUEs in the presence of U2U links, and (iii) the UL of GUEs without any U2U links. For (i) and (ii), we consider two UAV heights, namely 50~m and 150~m. 
In this figure, markers denote values obtained through our approximated expressions derived in Section~IV, whereas solid/dashed/dotted lines are obtained via simulations. Again, all curves show a close match, thus validating our analysis. Fig.~\ref{Pcov_T_hu} also allows to make a number of important observations:
\begin{itemize}[leftmargin=*]
\item U2U communications degrade the UL performance of GUEs. However, for the scenario where the UAVs fly at 50~m, such performance loss amounts to less than 3~dB in median, since (i) BSs perceive interfering UAVs through their antenna sidelobes, and (ii) UAVs generally transmit with low power due to the good U2U channel conditions.
\item The U2U performance degrades as UAVs fly higher, due to an increased UAV-to-UAV and GUE-to-UAV interference. The former is caused by a higher probability of LoS between a receiving UAV and interfering UAVs. The latter is caused by a higher probability of LoS between a receiving UAV and interfering GUEs, whose effect outweighs having larger GUE-UAV distances.
\item The GUE UL performance also degrades as UAVs fly higher. However, this degradation is less significant than that experienced by the U2U links, since interference generated by GUEs in other cells is dominant.
\end{itemize}

\begin{figure}
	\centering
	\includegraphics[width=\figwidth]{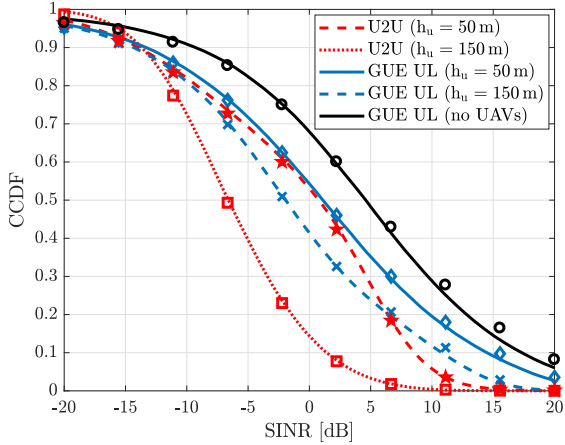}
	\caption{CCDF of the SINR per PRB experienced by: (i) U2U links, (ii) GUE UL in the presence of U2U links, and (iii) GUE UL without U2U links, in the underlay and for $h_{\u} = \lbrace 50, 150 \rbrace$~m. Curves and markers are respectively obtained via simulations and through our approximated analysis in Section~IV.}
	\label{Pcov_T_hu}
\end{figure}

After having validated their accuracy, in the remainder of this section we will use the expressions obtained through our approximated analysis in Section~IV.




\subsection{Effect of Power Control and Resource Allocation}


Fig.~\ref{Pcov_epsU_T} shows the probability of experiencing SINRs per PRB larger than -5~dB for both U2U and GUE UL in the underlay, with $\eta_\u=1$, as a function of $\eu$. We also consider three different values for the mean U2U distances $\bar{R}_\mathrm{u}$, namely 50~m, 100~m, and 150~m. Fig.~\ref{Pcov_epsU_T} allows us to draw the following conclusions:
\begin{itemize}[leftmargin=*]
\item The UAV power control policy has a significant impact on the performance of both U2U and GUE UL.\footnote{We refer the interested reader to \cite[Fig.~3]{AzaGerGar19} for a detailed breakdown of (i) the mean useful received power, (ii) the mean interference power received from GUEs, and (iii) the mean interference power received from UAVs, for both U2U and GUE UL links, as a function of the UAV fractional power control factor $\eu$.} There exists an inherent tradeoff, whereby increasing $\eu$ improves the former at the expense of the latter:
\begin{itemize}[leftmargin=*]
\item For $0 < \eu < 0.4$, the U2U performance is deficient, since UAVs use a very low transmission power. In this range, the GUE UL performance is approximately constant, since the GUE-generated interference is dominant.
\item For $0.4 < \eu < 0.9$, the U2U performance increases at the expense of the GUE UL.
\item For $\eu > 0.9$, the U2U performance saturates and that of the GUEs stabilizes, since almost all aerial devices reach their maximum transmit power.
\end{itemize}
\item Smaller U2U link distances---for fixed UAV density---correspond to a better U2U performance for all values of $\eu$. 
This is because (i) UAVs perceive larger received signal powers for decreasing $\bar{R}_\mathrm{u}$, since the path loss of the U2U links diminishes faster than the UAV transmit power when $\bar{R}_\mathrm{u}$ lessens, and (ii) the reduced UAV-to-UAV interference due to the smaller transmission power employed by UAVs.
\item The GUE UL also benefits from smaller U2U link distances when $\eu > 0.4$, since UAVs lower their transmit power and therefore reduce the UAV-to-BS interference.
\end{itemize}



\begin{figure}
	\centering
	\includegraphics[width=\figwidth]{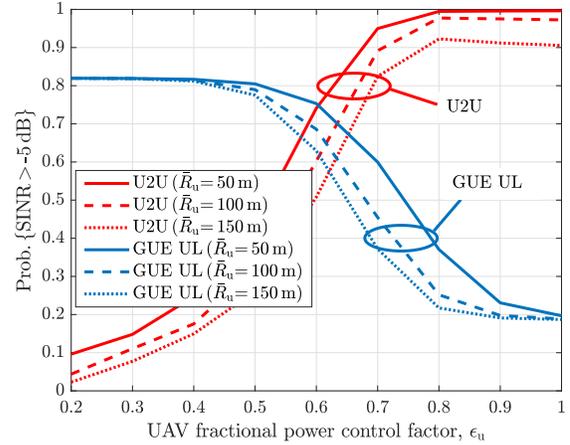}
	\caption{Probability of having SINRs $>-5$~dB for U2U and GUE UL links in the underlay vs. the UAV fractional power control factor $\eu$, and for $\bar{R}_\mathrm{u} = \lbrace 50, 100, 150 \rbrace$.}
	\label{Pcov_epsU_T}
\end{figure}


Fig.~\ref{Pcov_nPRBu_underlay_combined} shows the probability of experiencing SINRs per PRB larger than -5~dB for the GUE UL and U2U links in the underlay. We consider four configurations of the UAV fractional power control factor and spectrum access factor, i.e., $\eu=\{0.6,0.8\}$ and $\eta_{\mathrm{u}}=\{0.1,0.5\}$, and two values of the UAV density, i.e., $\lambda_{\mathrm{u}} = \{1\text{e-6},5\text{e-6}\}$, corresponding to red and blue markers, respectively. Notably, the results of Fig.~\ref{Pcov_nPRBu_underlay_combined} demonstrate how increasing $\eta_{\mathrm{u}}$, i.e., the number of PRBs allocated to UAV pairs, causes a sharp performance degradation for GUEs, except for the case where both the UAV density and the UAV transmit powers are constrained ($\lambda_{\mathrm{u}}=1$e-6, $\epsilon_{\mathrm{u}}=0.6$). As expected, also increasing the UAV density or transmit power generates more interference to the GUE UL, reducing the SINR. As for the U2U link performance, this remains almost constant with respect to $\eta_{\mathrm{u}}$ for $\lambda_{\mathrm{u}}=1$e-6, when UAV-to-UAV interference is negligible, whereas it decreases for $\lambda_{\mathrm{u}}=5$e-6, when UAV-to-UAV interference is more pronounced.



\begin{figure}[t!]
	\centering
	\includegraphics[width=\figwidth]{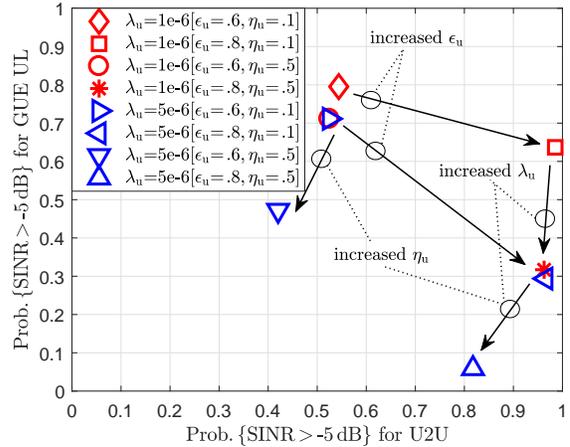}
	\caption{Probability of having SINRs $>-5$~dB for U2U and GUE UL links in the underlay for various combinations of $\epsilon_\u$, $\eta_\u$, and $\lambda_\u$.}
	\label{Pcov_nPRBu_underlay_combined}
\end{figure}

\subsection{Coverage Rate Comparison: Underlay vs. Overlay}


Fig.~\ref{CovRateU2U_combined} and Fig.~\ref{CovRateGUE_combined} show the CCDF of the coverage rate for U2U links and GUE UL, respectively, when $\eta_\u=0.1$, i.e., UAVs access five PRBs out of 50, in the underlay or in the overlay. 

Fig.~\ref{CovRateU2U_combined} provides the following insights:
\begin{itemize}[leftmargin=*]
\item In the overlay, the U2U coverage rate is only affected by UAV-to-UAV interference. Higher UAV densities thus have a more noticeable impact on the coverage rates than the UL power control strategy does. This can be observed by comparing scenarios with $\lambda_{\mathrm{u}}=1$e-6 (circled dotted red and circled dash-dotted purple curves) to scenarios with $\lambda_{\mathrm{u}}=5$e-6 (resp. circled solid green and circled dashed blue curves).
\item In the underlay, the U2U coverage rate is mostly affected by GUE-generated interference. Indeed, the rate degradation caused by increasing $\lambda_{\mathrm{u}}$ from 1e-6 to 5e-6 is limited when $\epsilon_{\mathrm{u}}=0.8$ (thick dash-dotted purple vs. dashed blue curves) and almost negligible when $\epsilon_{\mathrm{u}}=0.6$ (thick dotted red vs. solid green curves).
\item Comparing underlay vs. overlay, a crossover can be observed between green solid lines ($\epsilon_{\mathrm{u}}=0.6$, $\lambda_{\mathrm{u}}=5$e-6). This can be explained as follows. The upper part of the underlay CCDF corresponds to the worst U2U links---severely interfered by GUEs---which are better off in the overlay, where such interference is not present. The lower part of the underlay CCDF corresponds to the best U2U links---those not severely interfered by GUEs, for which UAV-to-UAV interference is dominant---that are worse off in the overlay, where all UAV interferers are concentrated on each PRB.
\end{itemize}

On the other hand, Fig.~\ref{CovRateGUE_combined} demonstrates that in order to maintain a high GUE UL rate, one should (i) adopt an overlay spectrum sharing approach, or (ii) limit the power employed by the UAVs in the underlay, i.e., set $\epsilon_{\mathrm{u}}=0.6$. However, we may also see from Fig.~\ref{CovRateU2U_combined} that setting $\epsilon_{\mathrm{u}}=0.6$ strongly reduces the U2U rates---almost by one order of magnitude in median for both $\lambda_{\mathrm{u}}=1$e-6 (thick dash-dotted purple vs. thick dotted red curves) and $\lambda_{\mathrm{u}}=5$e-6 (thick dashed blue vs. thick solid green curves).



\begin{figure}
	\centering
	\includegraphics[width=\figwidth]{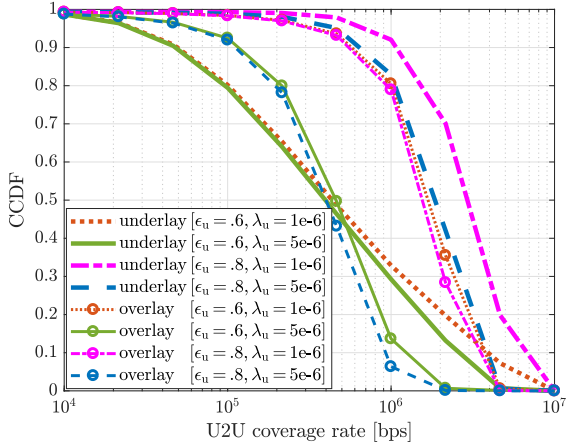}
	\caption{Coverage rate for U2U links with underlay and overlay, for various values of $\epsilon_{\mathrm{u}}$ and $\lambda_{\mathrm{u}}$, with UAVs accessing five PRBs ($\eta_\u = 0.1$).}
	\label{CovRateU2U_combined}
\end{figure}

\begin{figure}
	\centering
	\includegraphics[width=\figwidth]{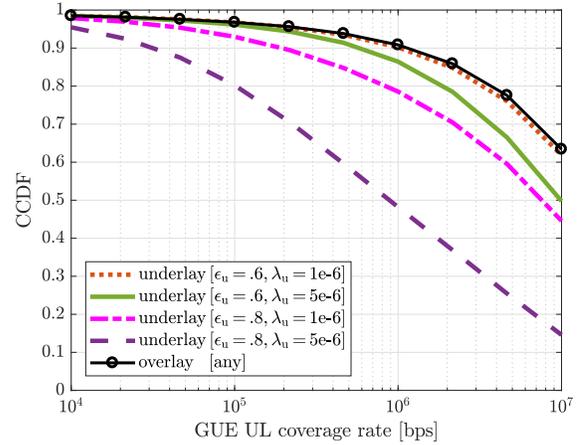}
	\caption{Coverage rate for GUE UL with underlay and overlay, for various values of $\epsilon_{\mathrm{u}}$ and $\lambda_{\mathrm{u}}$, with UAVs accessing five PRBs ($\eta_\u = 0.1$).}
	\label{CovRateGUE_combined}
\end{figure}

For ease of interpretation, Fig.~\ref{CovRate_tradeoff} combines Fig.~\ref{CovRateU2U_combined} and Fig.~\ref{CovRateGUE_combined}, illustrating the tradeoff between (i) the probability that U2U achieve rates of less than 100 kbps---a requirement set by the 3GPP for command and control information exchange \cite{3GPP36777}---, and (ii) the rates achieved by the $5\%$-worst GUEs. 
We consider two cases for the UAV density, namely $\lambda_{\mathrm{u}} = \{1\text{e-6},5\text{e-6}\}$, and four combinations for the spectrum sharing approach, namely \{underlay, overlay\} and $\eu=\{0.6,0.8\}$. We can observe from Fig.~\ref{CovRate_tradeoff} that, for both values of the UAV density $\lambda_{\mathrm{u}}$, the overlay spectrum sharing approach is capable of offering the best guaranteed GUE UL performance, while generally allowing a larger number of UAVs to achieve rates of 100~kbps.

\begin{figure}
	\centering
	\includegraphics[width=\figwidth]{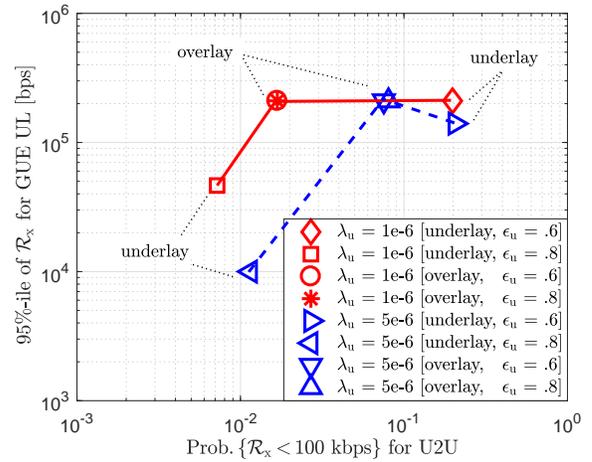}
	\caption{Tradeoff between (i) satisfying a requirement rate of 100~kbps for U2U links, and (ii) achieving a large rate for the $5\%$-worst GUEs with underlay and overlay, for $\lambda_{\mathrm{u}} = \left\{ \textrm{1e-6}, \textrm{5e-6} \right\}$ and $\epsilon_{\mathrm{u}} = \left\{ \mathrm{0.6, 0.8} \right\}$. }
	\label{CovRate_tradeoff}
\end{figure}
\section{Conclusion}
\label{sec:conclusion}

In this article, we provided an analytical framework to evaluate the performance of an uplink cellular network with both underlayed and overlayed U2U communications, while considering a realistic channel model, antenna pattern, and power control policy. 
In particular, we first derived exact analytical expressions for the coverage probability of all nodes, and then proposed practical assumptions that yield tight and compact approximations.

We found that in the underlay, (i) communications between pairs of close-by UAVs do not have a dramatic effect on the GUE UL---since the strong U2U channel gains allow UAVs to lower their transmit power---, and (ii) the U2U rate degradation caused by increasing the UAV density is limited---since the interference on U2U links is dominated by GUE transmissions. Instead, higher UAV densities result in lower U2U rates in the overlay, owing to all UAVs sharing the same resources without frequency hopping.

All in all, our results showed that overlaying U2U and GUE UL communications may be the preferable alternative in an urban scenario for simultaneously (i) maximizing the GUE UL performance, and (ii) guaranteeing a minimum U2U coverage rate of 100 kbps to the majority of UAV pairs.
\begin{appendix}

\subsection{Proof of Theorem \ref{U2Ucov_theorem}} \label{U2Ucov_proof}

	To obtain the U2U coverage probability, we can write
	
	\begin{equation}
	\begin{aligned}
	\pcovuav
	&= \mathbb{P}\left[\frac{\pu \zuu^{-1} \, \suu}{\mathrm{N_0} + I_\mathrm{u}} > \t \right]\\
	& = \sum_{\nu \in \{\mathrm{L},\mathrm{N}\}}\int_0^{\rMuu}  \mathcal{C}_{\mathrm{u}|\Ru}^\nu(\ru)\,f_{\Ru}^\nu(\ru) \,\mathrm{d}\ru,
	\end{aligned}
	\end{equation}
	where $f_{\Ru}^\nu(\ru) = f_{\Ru}(\ru) \cdot \pruu^\nu(\ru)$ and
	\begin{align} 
	\mathcal{C}_{\mathrm{u}|\Ru}^\nu(\ru) &\triangleq \mathbb{P}\left[\frac{\pu^\nu \zuu^\nu(\ru)^{-1} \, \suu^\nu}{\mathrm{N_0} + I_\mathrm{u}} > \t \right]. \label{CuvRuru}
	\end{align}
	$\mathcal{C}_{\mathrm{u}|\Ru}^\nu(\ru)$ can be obtained as follows
	
	\begin{align} \label{CuRu} \nonumber
	&\mathcal{C}_{\mathrm{u}|\Ru}^\nu(\ru) 
	= \mathbb{E}_{I_\mathrm{u}} \left\{\mathbb{P}\left[\suu^\nu > \frac{\t}{\pu^\nu \zuu^\nu(\ru)^{-1}}(\mathrm{N_0} + I_\mathrm{u}) \right] \right\} \\ \nonumber
	&\stackrel{\textrm{(a)}}{=} \mathbb{E}_{\iu} \left\{\sum_{i=0}^{\mathrm{m_{uu}^\nu}-1} \frac{\mathrm{s_u}^i}{i!} \, (\mathrm{N_0} + \iu)^i e^{-\mathrm{s_u} (\mathrm{N_0} + \iu)} \right\} \\ \nonumber
	&= \mathbb{E}_{\iu} \left\{\sum_{i=0}^{\mathrm{m_{uu}^\nu}-1} \frac{\mathrm{s_u}^i}{i!}e^{-\mathrm{N_0}\mathrm{s_u}} \, \sum_{j=0}^{i} \binom{i}{j} \mathrm{N_0}^{i-j} {\iu}^j e^{-\mathrm{s_u} \iu} \right\} \\ \nonumber
	&= \sum_{i=0}^{\mathrm{m_{uu}^\nu}-1} \q_{\u,i}^\nu \cdot \mathbb{E}_{\iu}\left\{{\iu}^i e^{-\mathrm{s_u} \iu} \right\} \\ 
	&= \sum_{i=0}^{\mathrm{m_{uu}^\nu}-1} (-1)^i \q_{\u,i}^\nu \cdot \D^i_{\mathrm{s_u}} \left[ \lapiu^\nu(\mathrm{s_u}) \right],
	\end{align}
	where (a) is obtained using the CDF of the small-scale fading in \eqref{FadingCDF}. As for the Laplacian of the interference in \eqref{CuRu}, we can write
	\begin{align} \label{lapiu}
	\lapiu^\nu(\mathrm{s_u}) \!=\! \lapicul^\nu(\mathrm{s_u}) \!\cdot\! \lapicun^\nu(\mathrm{s_u}) \!\cdot\! \lapiuul^\nu(\mathrm{s_u}) \!\cdot\! \lapiuun^\nu(\mathrm{s_u}),
	\end{align}
	where $I_\mathrm{xy}^\xi$ is the interference imposed by nodes x of condition $\xi$ on y. Each term in \eqref{lapiu} can be characterized as follows:
	\begin{align}
	\lapixyxi^\nu &= e^{ -2 \pi \lambda_\mathrm{x} \mathcal{I}_\mathrm{xy}^\xi };~~\xi \in \{\L,\N\},
	\end{align}
	where $\lambda_\c=\lamb$ accounts for the density of active GUEs, and
	\begin{align} 
	&\mathcal{I}_\mathrm{xy}^\xi = \int_0^\infty \prxy^\xi(r) \Big(1-\mathbb{E}_{\px,\sixy^\xi} \left[ e^{-\sy \px \,\zetxy^\xi(r)^{-1} \, \sixy^\xi} \right]\Big)\, r \,\mathrm{d}r  \nonumber \\ \label{IxyxiCal}
	&= \sum_{i = 1}^\infty \prxy^\xi(\r_i)\,\, \mathbb{E}_{\px,\sixy^\xi} \left[ \int_{\r_{i}}^{\r_{i+1}}  \Big(1- e^{-\s \px \,d_\mathrm{xy}^{-\axy^\xi} \, \sixy^\xi} \Big)\, r \,\mathrm{d}r \right],
	\end{align}
	where $	\s = \mathrm{s_y} \frac{\mathrm{g_{xy}}(\r_i)}{\hat{\tau}_{\mathrm{xy}}^\xi} $.
	In the following, we calculate the integral term in the right-hand side of \eqref{IxyxiCal}.
	Let us consider a change of variable as $\omega = \s \px \,d_\mathrm{xy}^{-\axy^\xi} \, \sixy^\xi$, which yields
	\begin{equation} \label{eq2}
	\begin{aligned}
	\int_{\r_{i}}^{\r_{i+1}}  &\Big(1- e^{-\s \px \,d_\mathrm{xy}^{-\axy^\xi} \, \sixy^\xi} \Big)\, r \,\mathrm{d}r \\
	&= \frac{(\s \px \sixy^\xi)^{\bxy^\xi}}{\axy^\xi} \int_{ \omega_2}^{ \omega_1} \omega^{-1-\bxy^\xi} (1-e^{-\omega}) \d\omega,
	\end{aligned}
	\end{equation}
	where $\bxy^\xi \triangleq 2/\axy^\xi$, $\omega_1 =  \mu_1 \sixy^\xi$, $\omega_2 =  \mu_2 \sixy^\xi$ and
	\begin{align}
	\mu_1 &\triangleq \frac{\s \px}{(\r_i^2+\mathrm{h_{xy}^2})^{\axy^\xi/2}},~~~	\mu_2 \triangleq \frac{\s \px}{(\r_{i+1}^2+\mathrm{h_{xy}^2})^{\axy^\xi/2}}.
	\end{align}
	The integral in the right-hand side of \eqref{eq2} is equal to
	\begin{equation} \label{eq1}
	\begin{aligned}
	\int_{ \omega_2}^{ \omega_1} &\omega^{-1-\bxy^\xi} (1-e^{-\omega}) \d\omega = \frac{\axy^\xi}{2} \Big[ \omega_2^{-\bxy^\xi} (1-e^{-\omega_2}) \\
	&- \omega_1^{-\bxy^\xi} (1-e^{-\omega_1})  + \int_{\omega_2 }^{\omega_1} \omega^{-\bxy^\xi} e^{-\omega}  \,\mathrm{d}\omega \Big],
	\end{aligned}
	\end{equation}
	in which integration by parts is applied. Also, the integral in the right-hand side of \eqref{eq1} can be written as 
	\begin{align} \label{eq3}
	\int_{\omega_2 }^{\omega_1} \omega^{-\bxy^\xi} e^{-\omega}  \,\mathrm{d}\omega = \gamma\left(1-\bxy^\xi,\omega_1\right) - \gamma\left(1-\bxy^\xi,\omega_2\right),
	\end{align}
	where we used the definition of the incomplete gamma function.
	Therefore, by substituting \eqref{eq3} into \eqref{eq1}, and the corresponding result into \eqref{eq2}, we obtain
	\begin{align}\label{integralStep}
	&\int_{\r_{i}}^{\r_{i+1}}  \Big(1- e^{-\s \px \,d_\mathrm{xy}^{-\axy^\xi} \, \sixy^\xi} \Big)\, r \,\mathrm{d}r \\ \nonumber
	& = \frac{\r_{i+1}^2+\mathrm{h_{xy}^2}}{2}(1-e^{-\mu_2\sixy^\xi}) - \frac{\r_{i}^2+\mathrm{h_{xy}^2}}{2}(1-e^{-\mu_1 \sixy^\xi}) \\ \nonumber
	&+ \frac{(\s \px \sixy^\xi)^{\bxy^\xi}}{2} \Big[  \gamma\left(1-\bxy^\xi,\mu_2\sixy^\xi \right) -  \gamma\left(1-\bxy^\xi,\mu_1 \sixy^\xi \right) \Big].
	\end{align}
	In order to obtain the expectation in the right-hand side of \eqref{IxyxiCal}, we note that for Nakagami-m fading $\psi$ with parameter $\mathrm{m}$ we have
	\begin{align}
	\mathbb{E}_{\psi} \left[e^{-\mu \psi}\right] = \left(1+\frac{\mu}{\mathrm{m}}\right)^{-\mathrm{m}}.
	\end{align}
	Also, by using \cite[eq. 6.455]{gradshteyn2014table} we obtain
	\begin{align} 
	&\mathbb{E}_{\psi} \left[\psi^{\beta} \gamma \left(1-\beta,\mu \psi \right)\right] \nonumber\\
	&= \frac{\mathrm{m}^\mathrm{m}}{\Gamma(\mathrm{m})}  \int_0^\infty \omega^{\beta+\mathrm{m}-1} e^{-\mathrm{m} \omega}  \,\gamma\left(1-\beta,\mu \omega \right) \,\mathrm{d}\omega \nonumber\\ 
		&= \frac{\mathrm{m}^\mathrm{m}}{\Gamma(\mathrm{m})} \cdot \frac{\mu^{1-\beta}\Gamma(\mathrm{m}+1)}{(1-\beta)(\mathrm{m}+\mu)^{1+\mathrm{m}}}\, {_2}F_1\left(1,1+\mathrm{m};2-\beta;\frac{\mu}{\mu+\mathrm{m}}\right) \nonumber \\
	&= \frac{\mathrm{m}^{1+\mathrm{m}} \mu^{1-\beta}}{(1-\beta)(\mathrm{m}+\mu)^{1+\mathrm{m}}}\,{_2}F_1\left(1,1+\mathrm{m};2-\beta;\frac{\mu}{\mu+\mathrm{m}}\right).
	\end{align}
	Now following the transformation properties of the hypergeometric function \cite[eq. 9.131]{gradshteyn2014table} we can write
	\begin{equation}\label{hyperFunc}
	\begin{aligned} 
	&{_2}F_1\left(1,1+\m;2-\beta;\frac{\mu}{\mu+\m}\right) \\
	&= \left(\frac{\m}{\mu+\m}\right)^{-1-\m} {_2}F_1\left(1+\m,1-\beta;2-\beta;-\frac{\mu}{\m}\right).
	\end{aligned}
	\end{equation}
	Therefore by using \eqref{integralStep}--\eqref{hyperFunc} we have
	\begin{equation} \label{eq4}
	\begin{aligned}
	\mathbb{E}_{\sixy^\xi} &\left[ \int_{\r_{i}}^{\r_{i+1}}  \Big(1- e^{-\s \px \,d_\mathrm{xy}^{-\axy^\xi} \, \sixy^\xi} \Big)\, r \,\mathrm{d}r \right] \\ &= \Psi_\mathrm{xy}^\xi\left(\s,\r_{i+1}\right) - \Psi_\mathrm{xy}^\xi\left(\s,\r_{i}\right),
	\end{aligned}
	\end{equation}
	and accordingly by replacing \eqref{eq4} into \eqref{IxyxiCal} we conclude
	\begin{align} \nonumber 
	&\mathcal{I}_\mathrm{xy}^\xi = \mathbb{E}_{\px} \left[ \sum_{i = 1}^\infty \prxy^\xi(\r_i) \Big(\Psi_\mathrm{xy}^\xi\left(\s,\r_{i+1}\right) - \Psi_\mathrm{xy}^\xi\left(\s,\r_{i}\right)\Big) \right] \\ \nonumber
	&= \int_0^\infty  f_{R_\mathrm{x}}^\mathrm{L}(x)\,\sum_{i = 1}^\infty \prxy^{\xi} \Big(\underbrace{\Psi_\mathrm{xy}^\xi\left(\mathrm{s},\r_{i+1}\right) - \Psi_\mathrm{xy}^\xi\left(\mathrm{s},\r_{i}\right)}_\text{computed at $P_\mathrm{x}^\mathrm{L}$}\Big)  \,\mathrm{d}x
	\\ \label{IxyFinal}
	&+ \int_0^\infty  f_{R_\mathrm{x}}^\mathrm{N}(x)\,\sum_{i = 1}^\infty \prxy^{\xi} \Big(\underbrace{\Psi_\mathrm{xy}^\xi\left(\mathrm{s},\r_{i+1}\right) - \Psi_\mathrm{xy}^\xi\left(\mathrm{s},\r_{i}\right)}_\text{computed at $P_\mathrm{x}^\mathrm{N}$}\Big)  \,\mathrm{d}x.
	\end{align}
	Using
	\begin{equation}
	\begin{aligned}
	\sum_{i = 1}^\infty \prxy^\xi(\r_i) &\Big(\Psi_\mathrm{xy}^\xi\left(\s,\r_{i+1}\right) - \Psi_\mathrm{xy}^\xi\left(\s,\r_{i}\right)\Big) \\
	&= \sum_{i = 1}^\infty \left[\prxy^{\xi}(\r_{i-1})-\prxy^{\xi}(\r_{i})\right] \Psi_\mathrm{xy}^\xi\left(\mathrm{s},\r_{i}\right)
	\end{aligned}
	\end{equation}
	in \eqref{IxyFinal} completes the proof.

			\subsection{Calculating the Derivatives of the Laplacian $\lapiu^\nu(\mathrm{s_u})$} \label{Laplacian_derivatives}
		In the following we explain the recursive computation of the Laplacian's dervative.
		According to the formula of Leibniz \cite{roman1980formula}, for the i-th derivative of $\lapiu(\s_\u)$ in (\ref{eq:U2ULinkCondCoverage}) we can write
	\begin{align}
	\D^i_{\s_\u}[\lapiu^\nu(\mathrm{s_u})]  = \sum_{j=0}^{i-1} \binom{i-1}{j} \D^{(i-j)}_{\s_\u}[\Lambda(\mathrm{s_u})] \cdot \D^j_{\s_\u}[\lapiu^\nu(\mathrm{s_u})],
	\end{align}
	where the i-th derivative of $\Lambda(\s_\u)$
	can be written as
	\begin{align} 
	\D^i_{\s_\u}[\Lambda] =  -2 \pi \Big(\lamu \!\!\!\! \sum_{\xi \in \{\mathrm{L},\mathrm{N}\}} \D^i_{\s_\u}[\mathcal{I}_\mathrm{uu}^\xi] + \lamb  \!\!\!\! \sum_{\xi \in \{\mathrm{L},\mathrm{N}\}}\D^i_{\s_\u}[\mathcal{I}_\mathrm{cu}^\xi]\Big),
	\end{align}
	with the i-th derivative of $\mathcal{I}_\mathrm{xy}^\mathrm{\xi}$ being
	\begin{equation} \label{eeq4}
	\begin{aligned} 
	&\D^i_{\s_\u}[\mathcal{I}_\mathrm{xy}^\mathrm{\xi}] \\ 
	&= \int_0^\infty  f_{R_\mathrm{x}}^\mathrm{L}(x)\,\sum_{i = 1}^\infty \left[\prxy^{\xi}(\r_{i-1})-\prxy^{\xi}(\r_{i})\right] \D^i_{\s_\u}[\Psi_\mathrm{xy}^\xi]  \,\mathrm{d}x
	\\ 
	&+ \int_0^\infty  f_{R_\mathrm{x}}^\mathrm{N}(x)\,\sum_{i = 1}^\infty \left[\prxy^{\xi}(\r_{i-1})-\prxy^{\xi}(\r_{i})\right] \D^i_{\s_\u}[\Psi_\mathrm{xy}^\xi]  \,\mathrm{d}x,
	\end{aligned}
	\end{equation}
	and the i-th derivative of $\Psi_\mathrm{xy}^\xi$ being
	\begin{equation} 
	\begin{aligned}
	\D^i_{\s_\u}[\Psi_\mathrm{xy}^\xi] &= -\frac{\r^2+\mathrm{h_{xy}^2}}{2} \D^i_{\s_\u}\left[\left(\frac{\m}{\m+\mu}\right)^{\m}\right] \\
	&\!\!\!\!\!- \D^i_{\s_\u} \left[\mathcal{K}
	\,{_2}F_1\left(1+\m,1-\beta;2-\beta;-\frac{\mu}{\m}\right) \right].
	\end{aligned} \label{derivative_Psi}
	\end{equation}


From \eqref{eeq5} one can see that $\mu$ is a linear function of $\s$ and hence a linear function of $\s_\u$, and therefore can be written as $\mu = \ell_1 \cdot \s_\u$ where $\ell_1$ is a new parameter independent from $\s_\u$. Therefore one can see 
\begin{equation} \label{eeq3}
\begin{aligned}
\D^i_{\s_\u}& \left[\left(\frac{\m}{\m+\mu}\right)^{\m} \right]= \D^i_{\s_\u} \left[\left(1+\ell_2 \cdot \s_\u\right)^{-\m}\right] \\
&= (-1)^i (\m)_i \ell_2^i \left(1+\ell_2 \cdot \s_\u\right)^{-\m-i},
\end{aligned}
\end{equation}
where $(\m)_i \triangleq \frac{(\m+i-1)!}{(\m-1)!}$, and $\ell_2 = \ell_1/m$. 


Also from \eqref{eeq5} we find out that $\mathcal{K}$ has linear dependency on $\s$ and equivalently $\s_\u$, and therefore can be stated as $\mathcal{K} = \ell_3 \cdot \s_\u$ where $\ell_3$ is a parameter with no dependency on $\s_\u$. Thus, one can write 
\begin{equation} \label{eeq2}
\begin{aligned}
\D^i_{\s_\u} &\left[\mathcal{K}
\,{_2}F_1\left(1+\m,1-\beta;2-\beta;-\frac{\mu}{\m}\right) \right] \\
&=\mathcal{K} \D^i_{\s_\u} \left[
\,{_2}F_1\left(1+\m,1-\beta;2-\beta;-\frac{\mu}{\m}\right) \right] \\
&+i \ell_3 \D^{i-1}_{\s_\u} \left[
\,{_2}F_1\left(1+\m,1-\beta;2-\beta;-\frac{\mu}{\m}\right) \right],
\end{aligned}
\end{equation}
where from \cite[eq. 1.29.1]{brychkov2008handbook} we have
\begin{equation} \label{eeq1}
	\begin{aligned}
	\D^i_{\s_\u} &\left[
	\,{_2}F_1\left(1+\m,1-\beta;2-\beta;-\frac{\mu}{\m}\right) \right] \\ &= \left(\frac{-\ell_1}{\m}\right)^i \frac{(\m+1)_i (1-\beta)_i}{(2-\beta)_i} \\ &\times {_2}F_1\left(1+\m+i,1-\beta+i;2-\beta+i;-\frac{\mu}{\m}\right).
	\end{aligned}
\end{equation}


By using \eqref{eeq1}, we obtain \eqref{eeq2}. Subsequently, \eqref{eeq4} can be computed by substituting \eqref{eeq2} and \eqref{eeq3} into \eqref{derivative_Psi}, which completes the recursive computation of the Laplacian's derivative. 

		
		
	\subsection{Proof of Theorem \ref{C2Bcov_theorem}} \label{C2Bcov_proof}
	
	To obtain the GUE UL coverage we can write
	\begin{align} \nonumber
	\pcovc &= \mathbb{P}\left[\frac{\pc \zcb^{-1} \, \scb}{\mathrm{N_0} + I_\c} > \t \right]
	\\ \label{GUEcovExpr}
	& = \sum_{\nu \in \{\mathrm{L},\mathrm{N}\}}\int_0^\infty  \mathcal{C}_{\c|\Rc}^\nu(\rc)\,f_{\Rc}^\nu(\rc) \,\mathrm{d}\rc,
	\end{align}
	where, similarly to \eqref{CuRu}, we have
	\begin{align} \nonumber
	\mathcal{C}_{\c|\Rc}^\nu(\rc) &\triangleq \mathbb{P}\left[\frac{\pc^\nu \zcb^\nu(\rc)^{-1} \, \scb^\nu}{\mathrm{N_0} + I_\c} > \t \right] \\ \label{condGUEcoveExpr}
	&= \sum_{i=0}^{\mathrm{m_{gb}^\nu}-1} (-1)^i \q_{\c,i}^\nu \cdot \D_{\s_\c}^i\left[ \lapic^\nu(\s_\c) \right].
	\end{align}
		
	The Laplacian of the aggregate interference, i.e. $\lapic^\nu(\s_\c)$ in \eqref{condGUEcoveExpr}, can be derived as follows
	\begin{align} \label{eq11}
	\lapic^\nu(\s_\c) = \lapiucl^\nu(\s_\c) \cdot \lapiucn^\nu(\s_\c) \cdot \lapiccl^\nu(\s_\c) \cdot \lapiccn^\nu(\s_\c),
	\end{align}
	where $\lapiucl$ and $\lapiucn$ are obtained similarly to \eqref{IxyFinal}. To characterize the interference from other GUEs, i.e. $I_{\mathrm{gg}}^{\xi}$, we can write
	\begin{align} \label{eq10}
	\mathcal{L}_{I_\mathrm{gg}^\xi} &= e^ {-2 \pi \int_0^\infty \lamci(r) \Big(1-\mathbb{E}_{\pc,\scb^\xi} \left[ e^{-\s_\c \pc \,\zcb^\xi(r)^{-1} \, \scb^\xi} \right]\Big)\, r \,\mathrm{d}r },
	\end{align}
	which can be stated as $\mathcal{L}_{I_\mathrm{gg}^\xi} = e^{-(2\pi\lamb)^2 \,\mathcal{I}_\mathrm{gg}^\xi}$ with
	\begin{equation} \label{eq9}
	\begin{aligned}
	&\mathcal{I}_\mathrm{gg}^\xi = \sum_{\nu \in \{\L,\N\}} \int_0^\infty \prcb^\xi(r) \times\\
	& \int_0^r \prcb^\nu(x) x e^{-\lamb \pi x^2}  \Bigg(1-\mathbb{E}_{\scb^\xi} \Bigg[ e^{-\frac{\s_\c \pc^\nu(x)  \, \scb^\xi}{\zcb^\xi(r)}} \Bigg]\Bigg)\,\mathrm{d}x\,  r \,\mathrm{d}r.
	\end{aligned}
	\end{equation}
	We rewrite the above integral as
	\begin{equation} \label{eq8}
	\begin{aligned}
	\int_0^\infty \prcb^\xi(r) &\int_0^r \prcb^\nu(x) x e^{-\lamb \pi x^2} \\ &\times \Bigg(1-\mathbb{E}_{\scb^\xi} \Bigg[ e^{-\frac{\s_\c \pc^\nu(x)  \, \scb^\xi}{\zcb^\xi(r)}} \Bigg]\Bigg)\,\mathrm{d}x\,  r \,\mathrm{d}r \\
	&\hspace{-2cm}= \int_0^\infty \prcb^\nu(x) x e^{-\lamb \pi x^2} \int_x^\infty  \prcb^\xi(r) \\ &\times \Bigg(1-\mathbb{E}_{\scb^\xi} \Bigg[ e^{-\frac{\s_\c \pc^\nu(x) \, \scb^\xi}{\zcb^\xi(r)}} \Bigg]\Bigg)\,r \,\mathrm{d}r\mathrm{d}x,
	\end{aligned}
	\end{equation}
	where the inner integral can be derived as follows
	\begin{align} \nonumber
	&\int_x^\infty  \prcb^\xi(r) \Bigg(1-\mathbb{E}_{\scb^\xi} \Bigg[ e^{-\frac{\s_\c \pc^\nu(x) \, \scb^\xi}{\zcb^\xi(r)}} \Bigg]\Bigg)\,r \,\mathrm{d}r \\ \nonumber
	&= \sum_{i = j(x)}^\mathrm{\infty} \prcb^\xi(\r_i)\, \mathbb{E}_{\scb^\xi} \Bigg[ \int_{\r_{i}}^{\r_{i+1}}  \Bigg(1- e^{-\s \pc^\nu \,\dcb^{-\acb^\xi} \, \scb^\xi} \Bigg)\, r \,\mathrm{d}r \Bigg] \\ \label{eq7}
	&= \sum_{i = j(x)}^\mathrm{\infty} \prcb^\xi(\r_i)\, \Bigg(\underbrace{\Psi_\mathrm{gb}^\xi\left(\s,\r_{i+1}\right) - \Psi_\mathrm{gb}^\xi\left(\s,\r_{i}\right)}_{\text{at $\pc = \pc^\nu(x)$}} \Bigg)
	\end{align}
	with $\mathrm{s} = \s_\c \frac{g_\mathrm{{gb}(\r_i)}}{\hat{\tau}_\mathrm{gb}^\xi}$. Note that we have approximated the BS antenna gain as invariant within $[r_{i},r_{i+1}]$, so that $g_\mathrm{gb}(r) = g_\mathrm{gb}(\r_i)$ is a constant value. Such approximation holds tight as the interval can be chosen as arbitrarily small.	
	
Finally, \eqref{eq9} can be calculated by substituting \eqref{eq7} into \eqref{eq8}, and it can then be used in \eqref{eq10} to compute the Laplacian of the interference in \eqref{eq11}. Subsequently, using \eqref{eq11} in \eqref{condGUEcoveExpr}, and the corresponding result in \eqref{GUEcovExpr} concludes the proof.

\end{appendix}

\subsection{Proof of Corollary~\ref{corollary:U2Ucoverage}} \label{proof:U2Ucoverage_approx}

From Approximation~2, we have $\mathcal{C}_{\mathrm{u}|\Ru}^\N(\ru) = 0$, thus
\begin{align}  \label{eq:U2ULinkCoverageApp}
\pcovuav &= \sum_{\nu \in \{\mathrm{L},\mathrm{N}\}}\int_0^{\rMuu}  \mathcal{C}_{\mathrm{u}|\Ru}^\nu(\ru)\,f_{\Ru}^\nu(\ru) \,\mathrm{d}\ru \nonumber\\
&= \int_0^\rMuu f_{\Ru}^\L(\ru) \mathcal{C}_{\mathrm{u}|\Ru}^\L(\ru) \mathrm{d}\ru,
\end{align}
where by using Approximation~1 we can write
\begin{align} \label{CuRuApp} 
&\mathcal{C}_{\mathrm{u}|\Ru}^\L(\ru) 
= \mathbb{E}_{I_\mathrm{u}} \left\{\mathbb{P}\left[\suu^\L > \frac{\t}{\pu^\L \zuu^\L(\ru)^{-1}}(\mathrm{N_0} + I_\mathrm{u}) \right] \right\} \nonumber\\ 
&= 1 -  \mathbb{E}_{I_\mathrm{u}} \left\{\mathbb{P}\left[\suu^\L < \frac{\t}{\pu^\L \zuu^\L(\ru)^{-1}}(\mathrm{N_0} + I_\mathrm{u}) \right] \right\} \nonumber\\ 
&\approx \mathbb{E}_{\iu} \left\{\sum_{i=1}^{\mathrm{m_{uu}^\L}} \binom{\mathrm{m_{uu}^\L}}{i}(-1)^{i+1} e^{-z_{\u,i}^\L (\mathrm{N_0} + I_\mathrm{u})} \right\} \nonumber\\ 
&= \sum_{i=1}^{\mathrm{m_{uu}^\L}} \binom{\mathrm{m_{uu}^\L}}{i}(-1)^{i+1} e^{-z_{\u,i}^\L \mathrm{N_0}} \cdot \mathbb{E}_{\iu}\left\{e^{-z_{\u,i}^\L I_\mathrm{u}} \right\} \nonumber\\ 
&= \sum_{i=1}^{\mathrm{m_{uu}^\L}} \binom{\mathrm{m_{uu}^\L}}{i}(-1)^{i+1} e^{-z_{\u,i}^\L \mathrm{N_0}} \cdot \lapiu^\L(z_{\u,i}^\L).
\end{align}

Under Approximation~2, we can neglect the interference generated by NLoS links and obtain
\begin{equation} \label{eqn:LaplacianProofTheoremOne}
\lapiu^\L(z_{\u,i}^\L) = e^{ -2 \pi (\hat{\lambda}_\u \mathcal{I}_\mathrm{uu}^\L + \lamb \mathcal{I}_\mathrm{gu}^\L)}.
\end{equation}

Corollary~\ref{corollary:U2Ucoverage} then follows by deriving $\mathcal{I}_\mathrm{uu}^\L$ and $\mathcal{I}_\mathrm{gu}^\L$ from \eqref{IxyFinal} by replacing $\pu$ with its mean, and by substituting $\mathcal{I}_\mathrm{uu}^\L$ and $\mathcal{I}_\mathrm{gu}^\L$ into (\ref{eqn:LaplacianProofTheoremOne}), (\ref{CuRuApp}), and (\ref{eq:U2ULinkCoverageApp}).


\subsection{Proof of Proposition~\ref{proposition:meanUAVtxPower}} \label{proof:meanUAVtxPower}

The mean UAV transmit power can be written as
\begin{align} \label{eqn:meanPowerProof1}
\mathbb{E}[\pu] = \sum_{\nu \in \{\mathrm{L},\mathrm{N}\}}\int_0^\rMuu f_{\Ru}^\nu(\ru) \mathbb{E}\left[\pu^\nu|\Ru = \ru\right]  \mathrm{d}\ru,
\end{align}
where $f_{\Ru}^\nu(\ru) = f_{\Ru}(\ru) \cdot \pruu^\nu(\ru)$ and where the integral in (\ref{eqn:meanPowerProof1}) can be written as
\begin{equation} \label{eqn:meanPowerIntegrals}
\begin{aligned}
&\int_0^\rMuu f_{\Ru}^\nu(\ru) \mathbb{E}\left[\pu^\nu|\Ru = \ru\right]  \mathrm{d}\ru \\
&\!=\!\int_0^\mathrm{r_m^\nu} \! \rho_\u \zeta_{\u\u}^{\eu} \cdot f_{\Ru}^\nu(\ru) \, \mathrm{d}\ru \!+\! \int_\mathrm{r_m^\nu}^\rMuu \! \pumax \cdot f_{\Ru}^\nu(\ru) \mathrm{d}\ru. \! 
\end{aligned}
\end{equation}
The first integral on the right-hand side of (\ref{eqn:meanPowerIntegrals}) is equal to
\begin{equation}
\begin{aligned}
&\int_0^\mathrm{r_m^\nu} \! \rho_\u \zeta_{\u\u}^{\eu} \cdot f_{\Ru}^\nu(\ru) \, \mathrm{d}\ru \\
&=\sum_{i=1}^{j} c_i^\nu \int_{\r_i}^\mathrm{r_{i+1}} \ru^{1+\auu^\nu \eu } \cdot e^{-\r_\u^2/(2\sigma_\mathrm{u}^2)}  \mathrm{d}\ru
\end{aligned}
\end{equation}
where 
\begin{align}
c_i^\nu &= \frac{\pur\left(\hat{\tau}_{\mathrm{uu}}^\nu/\guu\right)^{\eu}}{\sigma_\u^2[1-\e^{-r_\mathrm{M}^2/(2\sigma_\mathrm{u}^2)}]} \cdot \pruu^\nu(r_i).
\end{align}
With the change of variable $y = \r_\u^2/2\sigma_\mathrm{u}^2$, we can write
\begin{align} \label{eqn:meanPowerFirstIntegral}
&c_i^\nu \int_{\r_i}^\mathrm{r_{i+1}} \ru^{1+\auu^\nu \eu } \cdot e^{-\r_\u^2/(2\sigma_\mathrm{u}^2)}  \mathrm{d}\ru \nonumber\\
&= C_i^\nu \int_{y_i}^{y_{i+1}} y^{\auu^\nu \eu /2} \cdot e^{-y}  \mathrm{d}y \nonumber\\
&= C_i^\nu \left(\int_{0}^{y_{i+1}} y^{\auu^\nu \eu /2} \cdot e^{-y}  \mathrm{d}y - \int_{0}^{y_{i}} y^{\auu^\nu \eu /2} \cdot e^{-y}  \mathrm{d}y \right) \nonumber\\
& \!= C_i^\nu \Big[ \gamma(1\!+\!\auu^\nu \eu /2,y_{i+1}) \!-\! \gamma(1\!+\!\auu^\nu \eu /2,y_{i}) \Big] 
\end{align}
where $y_i = \frac{r_i^2}{2\sigma_u^2}$ and
\begin{align}
C_i^\nu &= \frac{(2\sigma_\u^2)^{{\auu^\nu\eu/2}}\pur\left(\hat{\tau}_{\mathrm{uu}}^\nu/\guu\right)^{\eu}}{1-\e^{-r_\mathrm{M}^2/(2\sigma_\mathrm{u}^2)}} \cdot \pruu^\nu(r_i);~i>0,
\end{align}
thus obtaining
\begin{equation} \label{eqn:meanPowerFirstIntegralResult}
\begin{aligned}
&\int_0^\mathrm{r_m^\nu} f_{\Ru}^\nu(\ru) \mathbb{E}[\rho_\u \zeta_{\u\u}^{\eu}] \, \mathrm{d}\ru \\
& = \sum_{i=1}^{j} [C_i^\nu-C_{i+1}^\nu] \, \gamma(1+\auu^\nu \eu k/2,y_{i+1})
\end{aligned},
\end{equation}
where $C_{j+1}^\nu = 0$. 

Similarly, the second integral on the right-hand side of (\ref{eqn:meanPowerIntegrals}) is equal to
\begin{equation} \label{eqn:meanPowerSecondIntegral}
\int_\mathrm{r_m^\nu}^\rMuu  \pumax \cdot f_{\Ru}^\nu(\ru) \mathrm{d}\ru = \sum_{i=j+1}^{k+1} [B_i^\nu-B_{i-1}^\nu] \, e^{-r_i^2/(2\sigma_\mathrm{u}^2)},
\end{equation}
where $B_j^\nu = 0$, $B_{k+1}^\nu = 0$, and
\begin{align}
B_i^\nu &= \frac{\pumax \, \pruu^\nu(r_i) }{1-\e^{-r_\mathrm{M}^2/(2\sigma_\mathrm{u}^2)}};~i>j.
\end{align}
Proposition~\ref{proposition:meanUAVtxPower} then follows from substituting (\ref{eqn:meanPowerFirstIntegralResult}) and (\ref{eqn:meanPowerSecondIntegral}) into (\ref{eqn:meanPowerIntegrals}), and then into (\ref{eqn:meanPowerProof1}).

\balance
\ifCLASSOPTIONcaptionsoff
  \newpage
\fi
\bibliographystyle{IEEEtran}
\bibliography{Strings_Gio,Bib_Gio,Bib_Mahdi}

\begin{thebibliography}{10}
\providecommand{\url}[1]{#1}
\csname url@samestyle\endcsname
\providecommand{\newblock}{\relax}
\providecommand{\bibinfo}[2]{#2}
\providecommand{\BIBentrySTDinterwordspacing}{\spaceskip=0pt\relax}
\providecommand{\BIBentryALTinterwordstretchfactor}{4}
\providecommand{\BIBentryALTinterwordspacing}{\spaceskip=\fontdimen2\font plus
\BIBentryALTinterwordstretchfactor\fontdimen3\font minus
  \fontdimen4\font\relax}
\providecommand{\BIBforeignlanguage}[2]{{%
\expandafter\ifx\csname l@#1\endcsname\relax
\typeout{** WARNING: IEEEtran.bst: No hyphenation pattern has been}%
\typeout{** loaded for the language `#1'. Using the pattern for}%
\typeout{** the default language instead.}%
\else
\language=\csname l@#1\endcsname
\fi
#2}}
\providecommand{\BIBdecl}{\relax}
\BIBdecl

\bibitem{AzaGerGar19}
M.~M. {Azari}, G.~{Geraci}, A.~{Garcia-Rodriguez}, and S.~{Pollin}, ``Cellular
  {UAV-to-UAV} communications,'' in \emph{Proc. IEEE PIMRC}, Sep. 2019, pp.
  1--7.

\bibitem{geraci2019preparing}
G.~Geraci, A.~Garcia-Rodriguez, and X.~Lin, ``Preparing the ground for drone
  communications,'' in \emph{IEEE ComSoc Technology News}, June 2019.

\bibitem{Qualcomm2017}
{Qualcomm Technologies Inc.}, ``{LTE} unmanned aerial aircraft systems,''
  \emph{Trial Report, v1.0.1}, May 2017.

\bibitem{fotouhi2018survey}
A.~{Fotouhi}, H.~{Qiang}, M.~{Ding}, M.~{Hassan}, L.~G. {Giordano},
  A.~{Garcia-Rodriguez}, and J.~{Yuan}, ``Survey on {UAV} cellular
  communications: {P}ractical aspects, standardization advancements,
  regulation, and security challenges,'' \emph{{IEEE} Commun. Surveys and
  Tutorials}, vol.~21, no.~4, pp. 3417--3442, Fourthquarter 2019.

\bibitem{MozSaaBen2018}
M.~{Mozaffari}, W.~{Saad}, M.~{Bennis}, Y.~{Nam}, and M.~{Debbah}, ``A tutorial
  on {UAVs} for wireless networks: {A}pplications, challenges, and open
  problems,'' \emph{{IEEE} Commun. Surveys and Tutorials}, vol.~21, no.~3, pp.
  2334--2360, {third quarter} 2019.

\bibitem{vinogradov2019tutorial}
E.~Vinogradov, H.~Sallouha, S.~De~Bast, M.~M. Azari, and S.~Pollin, ``Tutorial
  on {UAVs}: A blue sky view on wireless communication,''
  \emph{arXiv:1901.02306, \emph{Jan. 2019}.}

\bibitem{Cbinsights:19}
{CB Insights}, ``38 ways drones will impact society: From fighting war to
  forecasting weather, {UAVs} change everything,'' Jan. 2019.

\bibitem{ZenZhaLim2016}
Y.~Zeng, R.~Zhang, and T.~J. Lim, ``Wireless communications with unmanned
  aerial vehicles: {O}pportunities and challenges,'' \emph{{IEEE} Commun.
  Mag.}, vol.~54, no.~5, pp. 36--42, May 2016.

\bibitem{azari2017ultra}
M.~M. Azari, F.~Rosas, K.-C. Chen, and S.~Pollin, ``Ultra reliable {UAV}
  communication using altitude and cooperation diversity,'' \emph{{IEEE} Trans.
  Commun.}, vol.~66, no.~1, pp. 330--344, Jan. 2018.

\bibitem{Ericsson:18}
{Ericsson}, ``Drones and networks: Ensuring safe and secure operations,''
  \emph{white paper}, Nov. 2018.

\bibitem{YanLinLi18}
G.~Yang, X.~Lin, Y.~Li, H.~Cui, M.~Xu, D.~Wu, H.~Ryd\'{e}n, and S.~B. Redhwan,
  ``A telecom perspective on the internet of drones: From {LTE}-advanced to
  {5G},'' \emph{arXiv:1803.11048, \emph{Mar. 2018}.}

\bibitem{AzaRosPol2017}
M.~M. Azari, F.~Rosas, and S.~Pollin, ``Reshaping cellular networks for the
  sky: The major factors and feasibility,'' in \emph{Proc. IEEE ICC}, May 2018,
  pp. 1--7.

\bibitem{azari2017coexistence}
M.~M. Azari, F.~Rosas, A.~Chiumento, and S.~Pollin, ``Coexistence of
  terrestrial and aerial users in cellular networks,'' in \emph{Proc. IEEE
  Globecom Workshops}, Dec. 2017, pp. 1--6.

\bibitem{LopDinLi2018GC}
{D.~L\'{o}pez-P\'{e}rez}, M.~Ding, H.~Li, {L.~Galati Giordano}, G.~Geraci,
  A.~Garcia-Rodriguez, Z.~Lin, and M.~Hassan, ``On the downlink performance of
  {UAV} communications in dense cellular networks,'' in \emph{Proc. IEEE
  Globecom}, Dec. 2018, pp. 1--7.

\bibitem{NguAmoWig2018}
{H.~C.~Nguyen}, R.~Amorim, J.~Wigard, {I.~Z.~Kov\'{a}cs}, {T.~B.~S{\o}rensen},
  and {P.~Mogensen}, ``How to ensure reliable connectivity for aerial vehicles
  over cellular networks,'' \emph{IEEE Access}, vol.~6, pp. 12\,304--12\,317,
  Feb. 2018.

\bibitem{ZenLyuZha2018}
Y.~Zeng, J.~Lyu, and R.~Zhang, ``Cellular-connected {UAV}: Potentials,
  challenges and promising technologies,'' \emph{IEEE Wireless Commun. Mag.},
  vol.~26, no.~1, pp. 120--127, Feb. 2019.

\bibitem{MeiWuZhang2018}
W.~Mei, Q.~Wu, and R.~Zhang, ``Cellular-connected {UAV}: Uplink association,
  power control and interference coordination,'' \emph{IEEE Trans. Wireless
  Commun.}, pp. 1--1, 2019.

\bibitem{azari2019cellular}
M.~M. Azari, F.~Rosas, and S.~Pollin, ``Cellular connectivity for {UAVs}:
  Network modeling, performance analysis and design guidelines,'' \emph{{IEEE}
  Trans. Wireless Commun.}, vol.~18, no.~7, pp. 3366--3381, July 2019.

\bibitem{GarGerLop2018}
{A.~Garcia-Rodriguez}, G.~Geraci, {D.~L\'{o}pez-P\'{e}rez}, {L.~Galati
  Giordano}, M.~Ding, and {E.~Bj\"{o}rnson}, ``The essential guide to realizing
  {5G}-connected {UAVs} with massive {MIMO},'' \emph{IEEE Commun. Mag.}, pp.
  2--8, 2019.

\bibitem{GerGarGal2018}
G.~Geraci, {A.~Garcia Rodriguez}, {L.~Galati Giordano},
  {D.~L\'{o}pez-P\'{e}rez}, and {E.~Bj\"{o}rnson}, ``Understanding {UAV}
  cellular communications: From existing networks to massive {MIMO},''
  \emph{IEEE Access}, vol.~6, Nov. 2018.

\bibitem{ChaDanLar2017}
P.~Chandhar, D.~Danev, and E.~G. Larsson, ``Massive {MIMO} for communications
  with drone swarms,'' \emph{IEEE Trans. Wireless Commun.}, vol.~17, no.~3, pp.
  1604--1629, Mar. 2018.

\bibitem{DanGarGerICC2019}
C.~D'Andrea, A.~{Garcia-Rodriguez}, G.~Geraci, L.~{Galati Giordano}, and
  S.~Buzzi, ``Cell-free massive {MIMO} for {UAV} communications,'' in
  \emph{Proc. IEEE ICC Workshops}, May 2019, pp. 1--6.

\bibitem{zeng2019accessing}
Y.~Zeng, Q.~Wu, and R.~Zhang, ``Accessing from the sky: A tutorial on {UAV}
  communications for {5G} and beyond,'' \emph{arXiv:1903.05289, \emph{Mar.
  2019}.}

\bibitem{ZhaZhaDi19}
S.~{Zhang}, H.~{Zhang}, B.~{Di}, and L.~{Song}, ``Cellular {UAV-to-X}
  communications: Design and optimization for multi-{UAV} networks,''
  \emph{{IEEE} Trans. Wireless Commun.}, vol.~18, no.~2, pp. 1346--1359, Feb.
  2019.

\bibitem{FabCalCan17}
F.~{Fabra}, C.~T. {Calafate}, J.~C. {Cano}, and P.~{Manzoni}, ``A methodology
  for measuring {UAV-to-UAV} communications performance,'' in \emph{Proc. IEEE
  CCNC}, Jan. 2017, pp. 280--286.

\bibitem{LinAndGho2014}
X.~Lin, J.~G. Andrews, and A.~Ghosh, ``Spectrum sharing for device-to-device
  communication in cellular networks,'' \emph{{IEEE} Trans. Wireless Commun.},
  vol.~13, no.~12, pp. 6727--6740, Dec. 2014.

\bibitem{ChuCotDhi2017}
Y.~J. Chun, S.~L. Cotton, H.~S. Dhillon, A.~Ghrayeb, and M.~O. Hasna, ``A
  stochastic geometric analysis of device-to-device communications operating
  over generalized fading channels,'' \emph{{IEEE} Trans. Wireless Commun.},
  vol.~16, no.~7, pp. 4151--4165, July 2017.

\bibitem{GeoMunLoz2015}
G.~George, R.~K. Mungara, and A.~Lozano, ``An analytical framework for
  device-to-device communication in cellular networks,'' \emph{{IEEE} Trans.
  Wireless Commun.}, vol.~14, no.~11, pp. 6297--6310, Nov. 2015.

\bibitem{RimDar2017}
A.~Al-Rimawi and D.~Dardari, ``Analytical characterization of device-to-device
  and cellular networks coexistence,'' \emph{{IEEE} Trans. Wireless Commun.},
  vol.~16, no.~8, pp. 5537--5548, Aug. 2017.

\bibitem{AsaWanMan2014}
A.~Asadi, Q.~Wang, and V.~Mancuso, ``A survey on device-to-device communication
  in cellular networks,'' \emph{IEEE Commun. Surveys and Tutorials}, vol.~16,
  no.~4, pp. 1801--1819, Fourthquarter 2014.

\bibitem{3GPP36777}
{3GPP Technical Report 36.777}, ``Technical specification group radio access
  network; {S}tudy on enhanced {LTE} support for aerial vehicles ({R}elease
  15),'' Dec. 2017.

\bibitem{SinZhaAnd:15}
S.~Singh, X.~Zhang, and J.~G. Andrews, ``Joint rate and {SINR} coverage
  analysis for decoupled uplink-downlink biased cell associations in
  {HetNets},'' \emph{IEEE Trans. Wireless Commun.}, vol.~14, no.~10, pp.
  5360--5373, Oct. 2015.

\bibitem{YanGerQue:16}
H.~H. Yang, G.~Geraci, and T.~Q.~S. Quek, ``Energy-efficient design of {MIMO}
  heterogeneous networks with wireless backhaul,'' \emph{IEEE Trans. Wireless
  Commun.}, vol.~15, no.~7, pp. 4914--4927, July 2016.

\bibitem{BarGalGar2018GC}
P.~Baracca, {L.~Galati Giordano}, A.~Garcia-Rodriguez, G.~Geraci, and
  {D.~L\'{o}pez-P\'{e}rez}, ``Downlink performance of uplink fractional power
  control in {5G} massive {MIMO} systems,'' in \emph{Proc. IEEE Globecom}, Dec.
  2018, pp. 1--7.

\bibitem{bai2015coverage}
T.~Bai and R.~W. Heath, ``Coverage and rate analysis for millimeter-wave
  cellular networks,'' \emph{{IEEE} Trans. Wireless Commun.}, vol.~14, no.~2,
  pp. 1100--1114, Feb. 2015.

\bibitem{ITU2012}
{ITU-R P.1410-5}, ``Propagation data and prediction methods required for the
  design of terrestrial broadband radio access systems operating in a frequency
  range from 3 to 60 {GHz},'' Feb. 2012.

\bibitem{gradshteyn2014table}
I.~S. Gradshteyn and I.~M. Ryzhik, \emph{Table of integrals, series, and
  products}.\hskip 1em plus 0.5em minus 0.4em\relax Academic press, 2014.

\bibitem{roman1980formula}
S.~Roman, ``The formula of {Fa\`{a} di Bruno},'' \emph{The American
  Mathematical Monthly}, vol.~87, no.~10, pp. 805--809, 1980.

\bibitem{brychkov2008handbook}
Y.~A. Brychkov, \emph{Handbook of special functions: derivatives, integrals,
  series and other formulas}.\hskip 1em plus 0.5em minus 0.4em\relax Chapman
  and Hall/CRC, 2008.

\end{thebibliography}
\end{document}